\newtheorem*{remark}{Remark}
\newtheorem{theorem}{Theorem}
\newtheorem{lemma}{Lemma}
\newcommand{\traveltime}[1]{\ensuremath{k^{#1}}}
\newcommand{\platooningbenefit}[1]{\ensuremath{b^{#1}}}
\newcommand{\threshold}[2]{\ensuremath{ \rho ^{#1}_{#2}}} 
\newcommand{\releaseset}[2]{\ensuremath{ \mathcal D^{#2}_{#1}}}
\newcommand{\filter}[2]{\ensuremath{\pi^{#1}_{#2}}}
\newcommand{\filterj}[3]{\ensuremath{\pi^{#1}_{#2,#3}}}
\newcommand{\transprob}[4]{\ensuremath{p^{#1}_{#2,#3#4}}}
\newcommand{\obsprob}[5]{\ensuremath{r^{#1}_{#2,#3#4}(#5)}}
\newcommand{\Internalarrival}[2]{\ensuremath{\Theta^{#1}_{#2}}}
\newcommand{\internalarrival}[2]{\ensuremath{\theta^{#1}_{#2}}}
\newcommand{\internalarrivalsample}[3]{\ensuremath{\theta^{#1}_{#2,#3}}}
\begin{document}


\title{Hub-Based Platoon Formation:  Optimal Release Policies and Approximate Solutions  }

\author{Alexander~Johansson,
       Ehsan~Nekouei,  Xiaotong~Sun, \\ Karl Henrik Johansson,
        and~Jonas~M{\aa}rtensson
\thanks{The work by A. Johansson, K. H. Johansson and J. M\aa rtensson was partially supported by Horizon 2020 through the project ENSEMBLE, the Strategic Vehicle Research
and Innovation Programme, the Knut and Alice Wallenberg Foundation by a Wallenberg Scholar Grant,  and
the Swedish Research Council through the Distinguished Professor Grant 2017-01078.  The work by E.~Nekouei is supported by the start-up grant 7200658 from City University of Hong Kong.}     
	\thanks{A. Johansson, K. H. Johansson and J. M\aa rtensson are with the Integrated Transport Research Lab and Division of Decision and Control,
	School of Electrical Engineering and Computer Science, KTH Royal Institute
	of Technology, Stockholm, Sweden.,
	SE-100 44 Stockholm, Sweden. Emails:
	{\tt\small \{alexjoha, kallej, jonas1\}@kth.se}.  X. Sun is with the Thrust of Intelligent Transportation, the Hong Kong University of Science and Technology (Guangzhou), China., and the Department of Civil and Environmental Engineering, the Hong Kong University of Science and Technology, Hong Kong SAR, China. 
	Email: {\tt \small \{xtsun\}@ust.hk}. E. Nekouei is with the Department of Electrical Engineering,
	City University of Hong Kong, Hong Kong. 
	Email: {\tt \small \{enekouei\}@cityu.edu.hk}. }
\thanks{Manuscript submitted November 09, 2023}}

%
%

\markboth{}%
{A. Johansson \MakeLowercase{\textit{et al.}}: Bare Demo of IEEEtran.cls for IEEE Journals}
%



\maketitle

\begin{abstract}
This paper studies the optimal hub-based platoon formation at hubs along a highway under decentralized, distributed, and centralized policies. Hubs are locations along highways where trucks can wait for other trucks to form platoons. A coordinator at each hub decides the departure time of trucks, and the released trucks from the hub will form platoons. The problem is cast as an optimization problem where the objective is to maximize the platooning reward. We first show that the optimal release policy in the decentralized case, where the hubs do not exchange information, is to release all trucks at the hub when the number of trucks exceeds a threshold computed by dynamic programming. We develop efficient approximate release policies for the dependent arrival case using this result. To study the value of information exchange among hubs on platoon formation, we next study the distributed and centralized platoon formation policies which require information exchange among hubs. To this end, we develop receding horizon solutions for the distributed and centralized platoon formation at hubs using the dynamic programming technique. Finally, we perform a simulation study over three hubs in northern Sweden. The profits of the decentralized policies are shown to be approximately $3.5\%$ lower than the distributed policy and $8\%$  lower than the centralized release policy. This observation suggests that decentralized policies are prominent solutions for hub-based platooning as they do not require information exchange among hubs and can achieve a similar performance compared with distributed and centralized policies. 


\end{abstract}

\begin{IEEEkeywords}
Platoon coordination, optimal control, transport planning, cyber-physical systems, simulation.
\end{IEEEkeywords}

%
\IEEEpeerreviewmaketitle

\section{Introduction}\label{sec:intro}

\subsection{Motivation}

\IEEEPARstart{I}{n} the truck platooning technology, a set of trucks drive with small inter-vehicular distances. Typically, a human driver maneuvers the lead truck in a platoon, while automated driving systems maneuver the follower trucks. The truck platooning  technology reduces the operational cost of road transportation by lightening drivers' workload. The fuel consumption level of the follower trucks in a platoon reduces due to the small inter-vehicular distances between trucks.  Thus, the truck platooning technology results in significant environmental benefits. For instance, energy savings of approximately $10 \%$  for follower trucks have been reported in the literature based on experimental data \cite{Browand2004, Davila2013, Alam2015, Tsugawa2016, Bishop2017}. Other benefits of truck platooning include increased road capacity and safety and reduced travel time \cite{Ioannou1993, Fernandes2012, Jo2019}. All these potential benefits motivate the investigations of different platooning architectures and business models, \emph{e.g.}, see,  \cite{Janssen2015, Chottani2018, Axelsson2020}.


\subsection{Related work}

	\begin{figure}
	\centering
	\subfloat[Hub-corridor with $H$ hubs ]{\label{Fig:Line_hubs_a} 	\includegraphics[width=9cm]{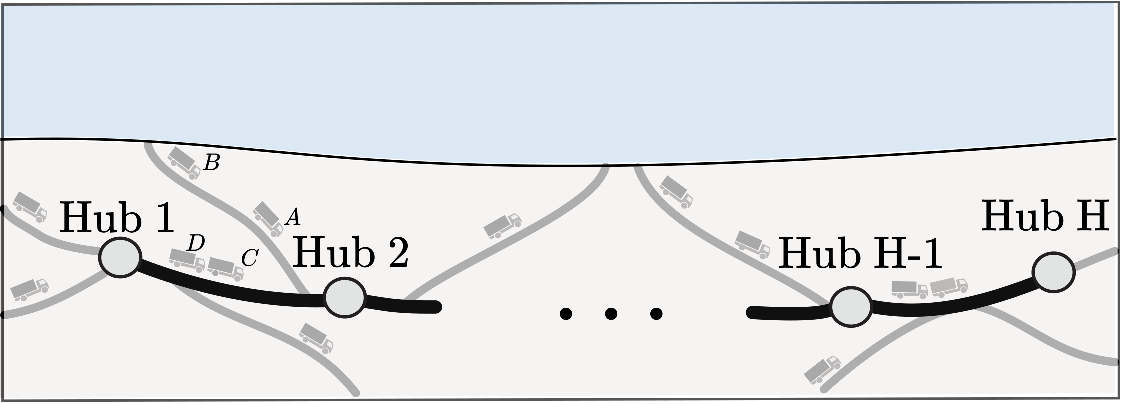}} \\
	
	\subfloat[Section of a hub-corridor including hubs $h-1$ and $h$]{\label{Fig:Line_hubs_b} \includegraphics[width=9cm]{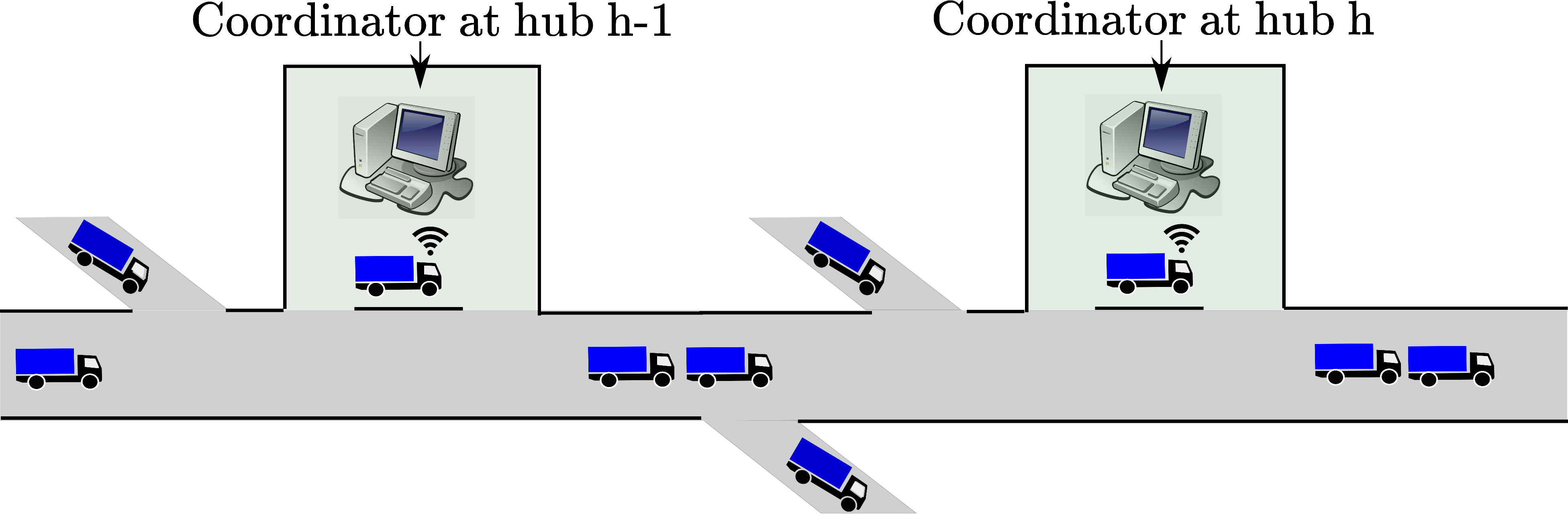}} \\
	
	\caption{ (a) A hub-corridor with $H$ hubs in a transportation network, and (b) the section of a hub-corridor including hubs $h-1$ and $h$ where coordinators make platoon release decisions }
	\label{Fig:Line_hubs}
\end{figure}

Platoon coordination is a central part of the truck platooning technology. Here, platoon coordination refers to the decision-making of which trucks should form a platoon depending on their travel time schedules and their geographical locations. Trucks can merge into platoons on the road by adjusting their speeds. For example see  \cite{Larsson2015,Liang2016,Hoef2018,Xiong2021}. One drawback of on-road platoon formation is their complexity level, as they require trucks to speed up or slow down to merge into platoons on the roads. Such maneuvers might be challenging to perform depending on the surrounding traffic conditions. Another drawback of on-road platoon formation is that the surrounding traffic may be disturbed by the platoon formation or may disturb the platoon formation. An alternative platoon formation approach is hub-based platoon formation. Hubs are locations along highways where trucks can stop and wait for others to form platoons, \emph{e.g.,} freight terminals, gas stations, parking places, tolling stations, and harbors. Under this approach, a platoon coordinator decides on the departure time of trucks from the hubs; the trucks which leave a hub simultaneously form a platoon. Platoon formation at a single hub when the coordinator knows trucks' arrival schedules to the hub was considered in \cite{Zhang2017, Boysen2018, Larsen2019}. In \cite{Johansson2021,   Bai2021}, the authors studied the agent-based platoon formation at multiple hubs in a transportation network, where trucks individually decide their departure times from hubs. In the agent-based approaches, the trucks are required to share their arrival schedules at hubs with each other. In \cite{Johansson2021b}, the authors studied platoon formation at multiple hubs when the decision-making of which platoons should form was distributed on coordinators at the hubs. In the above-mentioned work, the trucks are required to share their arrival schedules with the coordinators at the hubs.

 Transportation firms or individual truckers may need to keep their arrival times private for privacy or competitive reasons. For example, the arrival times may be sensitive information for a transportation firm or a trucker to share, as a competitor may use this information to take the firm's customers by offering earlier deliveries.  Thus, the arrival schedules may not always be available to the platoon coordinator. Platoon formation at a single hub when the trucks' arrivals are unknown to the hub coordinator in advance was studied in \cite{Johansson2020,Adler2020}, where the truck arrival process was modeled by a sequence of independent and identically distributed
random variables. This paper also considers a platoon formation problem with unknown arrivals but for multiple connected hubs instead of a single hub, as in  \cite{Johansson2020, Adler2020}. Our paper advances the state-of-the-art by developing efficient decentralized platoon formation policies, where the coordinator at each hub takes into account the downstream hubs' policies when computing its platoon release policy. Thus, the platoon formation problem in this paper becomes a multi-agent decentralized control problem.   We finally refer the reader to  \cite{Bhoopalam2018, Lesch2021} for extensive reviews on platoon coordination strategies.


\subsection{Contributions}


In this paper, we study the platoon formation in a set of hubs, as shown in Fig. \ref{Fig:Line_hubs_a}. In our set-up, each hub is equipped with a platoon coordinator, as shown in Fig. \ref{Fig:Line_hubs_b}, which decides the departure time of trucks from the hubs. The trucks that leave a hub at the same time form a platoon. We first develop the optimal release policies in the decentralized case as shown in Fig. \ref{Fig:Infstruct1}, where each coordinator only has access to the history of observed arrivals at its hub, the coordinators do not share any information with each other, and the arrival schedules of trucks are not known \emph{a priori}. We next develop coordination strategies in the distributed and centralized cases. In the distributed case, each coordinator has access to the release decisions of its preceding hub, whereas in the centralized case, the coordinators share their release decisions with each other and the arrival schedules of trucks are known by the coordinators.

\begin{figure}
\centering
\subfloat[Decentralized coordination]{\label{Fig:Infstruct1}	\includegraphics[width=7.6cm]{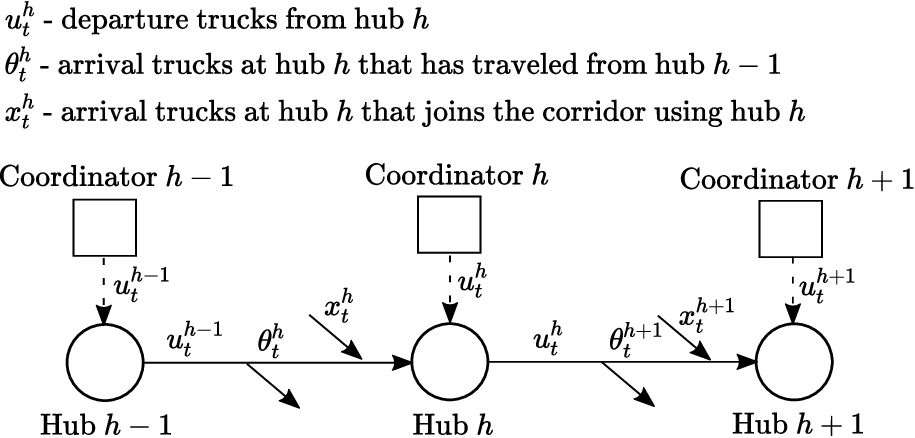}} \\

\subfloat[Distributed coordination]{\label{Fig:Infstruct2}	\includegraphics[width=7.6cm]{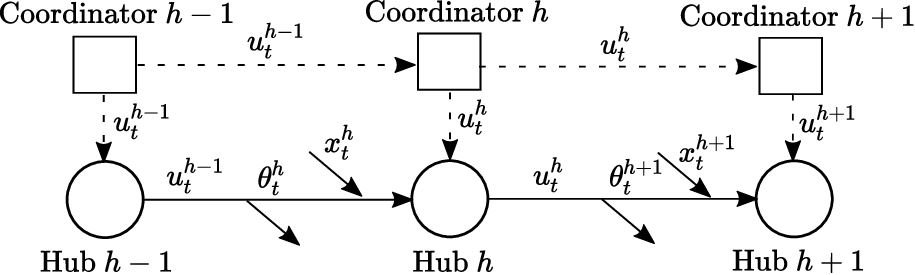}} \\

\subfloat[Centralized coordination]{\label{Fig:Infstruct3}	\includegraphics[width=7.0cm]{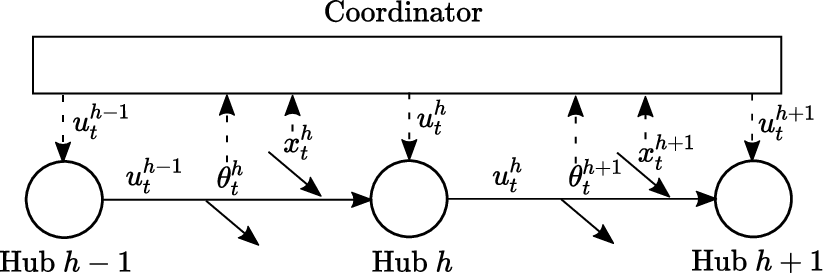}} 
 \caption{Coordination under different information structures}
 \label{Fig:Infstructs}
\end{figure}

The main contributions in this paper are as follows:

\begin{itemize}

\item   We investigate the structure of the optimal decentralized release policy at a hub when its truck arrivals, \emph{i.e.}, the number of trucks arriving at the hub,   are independent over time. Here, we show that a thresholding release policy is optimal. That is, it is optimal for the coordinator to release all the trucks located at the hub when the number of trucks is more than a threshold which is computed  by dynamic programming (DP). We use this result to develop approximate thresholding decentralized release policies in cases where truck arrivals are not independent over time by approximating the truck arrivals as independent.

\item We study the structure of the optimal decentralized release policy at a hub when the truck arrivals to its preceding hub are independent over time and the preceding hub follows an optimal release policy. We show that the optimal decentralized release policy, in this case, is in the form of a threshold policy which depends on a variable capturing the state of the preceding hub and can be computed using DP. This result is used to develop approximate thresholding decentralized release policies in cases where truck arrivals at the preceding hub are not independent over time by approximating the truck arrivals at the preceding hub as independent.


\item We propose solutions for the distributed and centralized cases, which are receding horizon solutions computed by DP.

\item We perform a simulation study for hub-based platoon formation along a highway in Sweden, where the decentralized release policies are compared with the  distributed and centralized release policies. The simulation study shows that the performance gap between the decentralized release policies and the distributed and centralized release policies is less than $8\%$. This observation indicates that the decentralized policies are potential solutions for hub-based platooning since they do not require communication between hubs and  achieve a similar performance compared with distributed and centralized policies.

\end{itemize}


\subsection{Outline}

The outline of this paper is as follows. In Section \ref{Sec:system}, the system model of the hub-corridor is formulated, including the arrival process and the platoon release problem at each hub. In Section~\ref{Sec:firsthub}, the decentralized platoon release problem at a hub is studied under an independent-arrival assumption. In Section~ \ref{Sec:secondhub}, we also study the decentralized platoon release problem at a hub, but under the assumption that the preceding hub has independent arrivals and follows the developed release policy in Section~\ref{Sec:firsthub}. In Section \ref{Sec:Centsol}, release policies for distributed and centralized cases are developed. In Section \ref{Sec:simulation}, the proposed release policies are evaluated in a simulation study over a hub-corridor along a Swedish highway. Finally, the paper is concluded in Section \ref{Sec:conclusions}.

\section{System model}\label{Sec:system}

We consider a set of hubs in a transportation network, as shown in Fig. \ref{Fig:Line_hubs_a}, where trucks can wait for other trucks and form platoons. Examples of hubs include drivers resting areas, gas stations, and harbors. In our set-up, each hub has a platoon coordinator that decides whether the available trucks at the hub will depart as a platoon or wait for more trucks to join. We refer to the set of hubs located along a highway as a hub corridor, which is represented by the thick line in Fig.~\ref{Fig:Line_hubs_a}. The grey lines in Fig.~\ref{Fig:Line_hubs_a} represent roads used by trucks to join or leave the hub-corridor. The hubs are enumerated as~$1$ to $H$.

Consider the section of the hub-corridor that includes hubs~$h-1$ and $h$, as shown in Fig. \ref{Fig:Line_hubs_b}. The coordinator at hub~$h$ decides how many trucks to release from the hub at each time step. The released trucks depart from the hub in the form of a platoon. Let $u_t^h\in \mathbb{Z}_{\geq 0}$ denote the number of trucks released by the coordinator at hub $h$ at time step $t$. The state of a hub is defined as the number of trucks located at the hub. We use $n^h_t \in \mathbb{Z}_{\geq 0}$ to denote the state of hub $h$ at time step~$t$. The state of each hub changes dynamically based on the truck arrivals and the release decisions, as indicated in~Fig. \ref{Fig:Line_hubs_b}.  The state of hub $h$ has the following dynamics
\begin{align}\label{Eq:dynamics}
n^{h}_{t+1} & =n^{h}_{t}-u^{h}_t+x^{h}_{t+1}+\internalarrival{h}{t+1},
\end{align}
where $\internalarrival{h}{t}\in~\mathbb{Z}_{\geq 0}$ is the number of trucks that were released as a platoon at hub $h-1$ and arrive at hub $h$ at time step $t$, and $x^{h}_{t}\in \mathbb{Z}_{\geq 0}$ is the number of trucks that join the hub-corridor via hub $h$ at time step $t$. The state variable $n^{h}_{t}$
is a realization of the random variable
denoted $N^{h}_{t}$.

\begin{remark}
 Note that the trucks that arrive at the hub $h$ at time step $t$ can be divided into two groups. The first group is the trucks released as a platoon at hub $h-1$ and arrive at hub~ $h$. We denote the number of trucks in this group by $\theta^h_{t}$. The second group of trucks is those that enter the hub-corridor via hub $h$. The number of trucks in this group is denoted by $x^h_t$. For example, in Fig.~\ref{Fig:Line_hubs_a}, the trucks labeled with $A$ and~$B$ join the hub-corridor via the second hub, and the trucks labeled with $C$ and~$D$ which arrive at the second hub were released by the first hub as a platoon. In practice, the coordinator at hub $h$ can distinguish between these two groups of trucks by requiring the second group of trucks to send an acknowledgment message to the hub coordinator when they join the corridor. Thus, if the coordinator at hub $h$ does not receive a message from a truck, it knows that the truck was released from hub $h-1$.
\end{remark}


We assume that the truck arrivals are unknown to the coordinators \emph{a priori}. This is realistic if carriers or trucks keep their routes and schedules private or when the travel times are uncertain. To capture the uncertainty in the number of trucks arriving at hubs, the arrival variables $x^{h}_{t}$ and $\internalarrival{h}{t}$ are assumed to be realizations of random variables denoted $X^{h}_{t}$ and $\Internalarrival{h}{t}$, respectively. Throughout the paper, we assume that the random variables $X^{h}_{0}, X^{h}_{1},...$ are independent as these trucks have not been participating in a platoon formation process at a preceding hub, but the independent arrival assumption will in general not hold for the truck arrivals at hub $h$ that have been released at hub $h-1$.  The dependency between the hubs $h-1$ and $h$ captures that a truck that is released at the hub $h-1$ may exit the highway before reaching the hub $h$. We use $l^h$ to denote the likelihood of any truck exiting the highway between the hubs. Moreover, recall that the state variable $n_t^h$ is a realization of the random variable $N_t^h$, and the release decision $u_t^h$ is a mapping from $n_t^h$. Thus, the release decision $u_t^h$ can also be viewed as a random variable denoted by $U_t^h$.

The platoon release problem in the hub-corridor is defined as
\begin{equation}\label{Eq_problem_new}
\begin{aligned}
&\underset{\substack{\{\mu_0^h,\dots,\mu_T^h\}_h } }{\text{max}}&& \text{E}\Big [\sum_{h=1}^H \sum_{t=0}^T  R^h\big(N_t^h,U_t^{h}\big) \Big],\\
\end{aligned}
\end{equation}
where $T$ is the time horizon, and $R^h(n_t^h,u_t^h)$ denotes the reward of the coordinator of hub $h$ for releasing $u_t^h$ trucks when its state is~$n_t^h$. This reward can include the total cost-saving due to a platoon of length $u_t^h$ as well as the waiting cost of $n_t^h-u_t^h$ trucks at the hub. The release decision $u_t^h$ is computed using the release policy $\mu_t^h$ based on the available information to the coordinator at hub $h$, at time step $t$.


In this paper, we assume that the reward function $R^h(n_t^h,u_t^h)$ is convex in $u_t^h$ and $n_t^h$. An important example of the reward function is the following convex piecewise-linear form:
	\begin{equation}\label{Eq:reward}
R^h(n_t^h,u_t^h)= \max\Big(0, \platooningbenefit{h}(u_t^h-1)\Big)- c(n_t^h-u_t^h),
\end{equation}
where the first term is the platooning profit for releasing $u_t^h$ trucks that form a platoon with $u_t^h-1$ follower trucks and $\platooningbenefit{h}$ is the profit per follower truck. The second term is the cost for having $n_t^h-u_t^h$ trucks waiting at the hub, and $c$ is the waiting cost per truck.

This paper focuses on decentralized solutions to the platoon release problem in \eqref{Eq_problem_new}, where the release decisions at each hub only depend on its local state and information. Computing the optimal decentralized release policies in \eqref{Eq_problem_new} becomes prohibitively difficult as the number of hubs increases.   The difficulty in computing an optimal decentralized solution to the problem \eqref{Eq_problem_new} arises because of the nested information structure of the hubs; each hub has to use its history of observed arrivals to estimate the history of all other hubs' observed arrivals and history of release decisions, and note, the variations of possible histories of observed arrivals and release decisions are numerous even for a few hubs. We refer the reader to \cite{Ho1972, Dave2019} for works on decentralized optimal control under nested information structures.  To solve this problem, in this paper, we decompose the optimization problem in \eqref{Eq_problem_new} into~$H$ sub-problems, one for each hub. In the rest of the paper, we will explore different decentralized solutions to the platoon release problem based on our decomposition approach.

\begin{remark}
 During off-peak hours when trucks are unaffected by congestion, the travel times have low variability, and assuming that travel times are deterministic and known a priori is justified. However, travel times may be uncertain during peak hours, and the travel times between hubs can then be modeled as stochastic. Our set-up can be used to study platoon formation under both deterministic and stochastic travel-time scenarios.  
\end{remark}

\begin{remark} The reward function in the convex piecewise-linear form in \eqref{Eq:reward} is accurate if the incremental platooning profit is equal for each follower truck in a platoon and deterministic and constant over the day. Moreover, the cost of waiting at the hub is linear and the same for all trucks. These assumptions may not hold due, for example, to the influence of factors such as traffic conditions and heterogeneous truck properties. However, our reward function in \eqref{Eq:reward} can be applied to a case with heterogeneous trucks where trucks have different platooning benefits. One way is to use the average platooning benefit of the trucks in our problem formulation. The reward function can also be extended to capture the impact of traffic by letting the platooning benefit be a time-varying (deterministic) signal where the benefit is low during peak hours and low during off-peak hours.  

\end{remark}

\section{Decentralized release policy: \\single-hub approach}\label{Sec:firsthub}

In this section, we propose a solution for the decentralized platoon release problem in \eqref{Eq_problem_new} by decomposing it into $H$ decoupled sub-problems, where each hub maximizes its own reward from platooning and irrespective of the decision-making behavior of other hubs. The decentralized platoon release problem at hub $h$ is 
\begin{equation}\label{Eq_problem}
\begin{aligned}
&\underset{\mu_0^h,\dots,\mu_T^h}{\text{max}}&& \text{E}\Big [\sum_{t=0}^T  R^h\big(N_t^h,U_t^{h}\big) \Big],\\
\end{aligned}
\end{equation}
where the objective is to maximize the reward of hub $h$ from platooning.  In the next subsection, we study the structure of the optimal release policy at a hub, under the single-hub approach, when its truck arrivals are independent over time. Later, we will use this result to derive an approximate solution for the single-hub approach when truck arrivals are dependent over time.


\subsection{Independent arrival case}

In this sub-section, we study the structure of the optimal release policy at hub $h$ under the single-hub approach and its  arrivals are independent over time, that is,  the random variables  $X_{0}^h,\dots,X_{T}^h,\Internalarrival{h}{0},\dots,\Internalarrival{h}{T}$ are assumed to be independent. Fig.~\ref{Ill3} illustrates the single-hub approach.  To study the structure of the optimal release policy at hub $h$,  we first derive the Bellman optimality equation associated with the optimal release policy of hub $h$. Using the Bellman's principle of optimality, the optimal value function associated with the state $n_t^h$ can be expressed as
\begin{align*}
&V^h_t(n_t^h)=\underset{ u_t^h \in \mathcal U^h_t(n_t^h)}{\text{max}}  R^h(n_t^h,u_t^h) + \\ & \hspace{3.5cm} \text{E} [V^h_{t+1}(n_t^h-u_t^h+X_{t+1}^h+\Internalarrival{h}{t+1}) ],
\end{align*}
where the decision variable $u_t^h$ can  take  values in the set $ \mathcal U^h_t(n^{h}_t)= \{0,\dots,n^{h}_t\}$ as the maximum number of trucks to release is the number of trucks at the hub. The value function at the terminal time step is  $V^h_T(n_T^h)~=~R^h(n_T^h,n_T^h)$, which is the reward for releasing the remaining trucks. The optimal value function can be computed using standard dynamic programming (DP), and the optimal decision can then be computed using the optimal value function. We omit to present the standard DP algorithm but refer the reader to \cite{Bertsekas2012} for a detailed presentation of the standard DP algorithm.



\begin{figure}
    \centering
    \includegraphics[width=5.5cm]{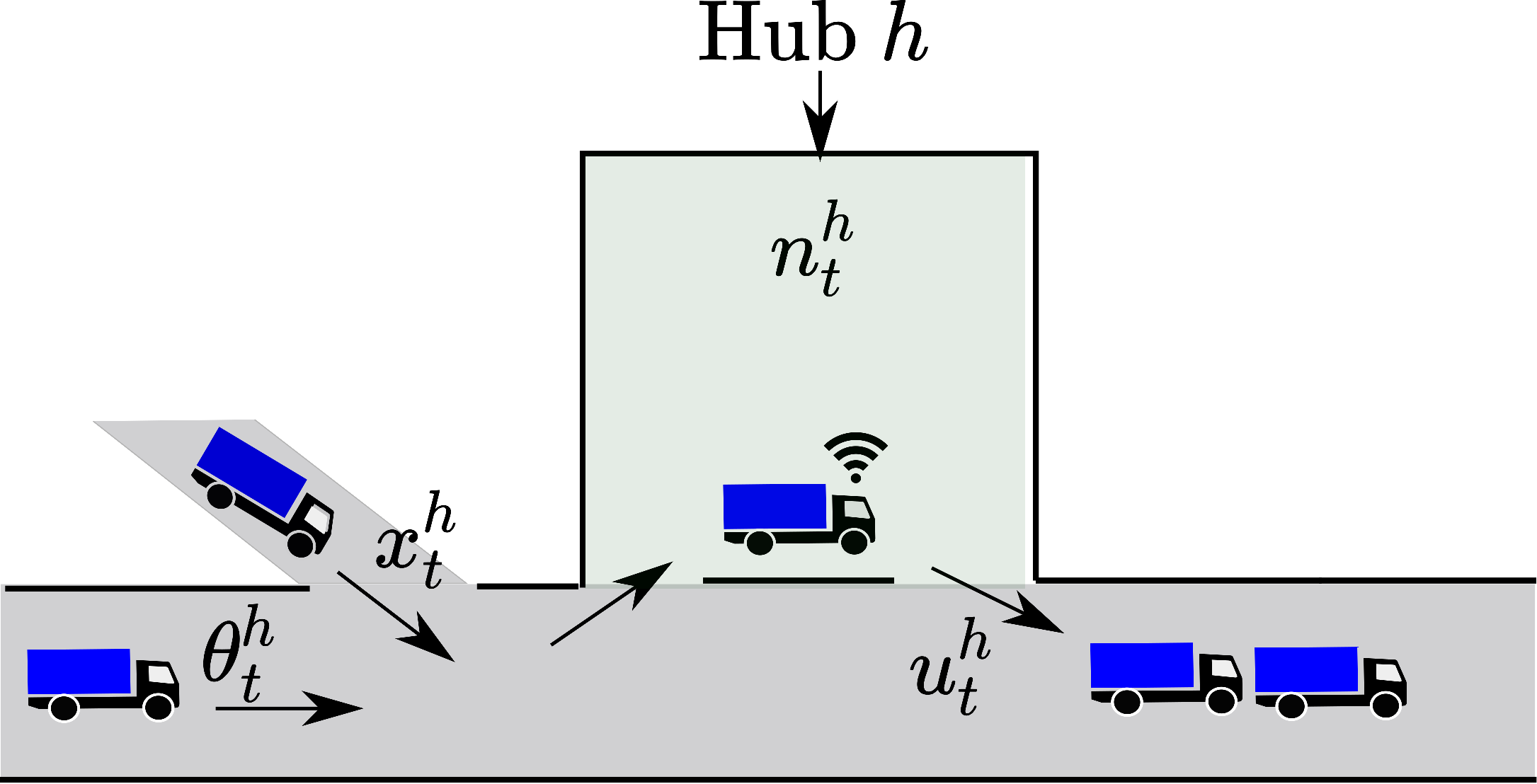}
    \caption{Single-hub model}
    \label{Ill3}
\end{figure}

The following theorem gives fundamental results of the structure of the optimal release policy under the single-hub approach when the truck arrivals are independent. The structural results significantly reduce the complexity of computing the optimal policy.

\begin{theorem}\label{Theorem:first_hub1}
Consider the decentralized platoon release problem in \eqref{Eq_problem}  when the truck arrivals are independent. The following holds:
\begin{itemize}
	\item The value function $V_t^h(n_t^h)$ is convex in $n_t^h$.
	\item The optimal release decision $u_t^{h,*}$ takes values in the set $\{0,n_t^h\}$. 
\end{itemize}
Furthermore,  if the reward function has the piecewise-linear form in \eqref{Eq:reward}, then there exists a threshold	$\threshold{h}{t}\in \mathbb{Z}_{\geq 0}$ such that 
	\begin{equation*}
	u_t^{h,*}=\begin{cases}
	n_t^h &\text{if } n_t^h\geq \threshold{h}{t} \\
	0 &\text{if } n_t^h< \threshold{h}{t}. \\
	\end{cases}
	\end{equation*}
The threshold $\threshold{h}{t}$ can be computed using dynamic programming.	
\end{theorem}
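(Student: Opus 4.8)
The plan is to prove the three claims in sequence by backward induction on $t$, since the structural properties propagate naturally through the Bellman recursion.

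\textbf{Step 1 (Convexity of $V_t^h$).} I would show by backward induction that $V_t^h(n_t^h)$ is convex in $n_t^h$ (extended to the reals, or treated as convexity on the integer lattice). The base case is $V_T^h(n_T^h) = R^h(n_T^h, n_T^h)$, which is convex since $R^h$ is assumed jointly convex and the diagonal restriction of a convex function is convex. For the inductive step, assume $V_{t+1}^h$ is convex. The term $\text{E}[V_{t+1}^h(n_t^h - u_t^h + X_{t+1}^h + \Internalarrival{h}{t+1})]$ is an expectation of convex functions of the affine quantity $(n_t^h - u_t^h)$ shifted by the \emph{independent} arrivals, hence convex in $(n_t^h, u_t^h)$; adding $R^h(n_t^h, u_t^h)$ keeps joint convexity. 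The remaining point is that $V_t^h(n_t^h) = \max_{u_t^h \in \{0,\dots,n_t^h\}} g(n_t^h, u_t^h)$ for a jointly convex $g$, where the constraint set depends on $n_t^h$; I would argue that partial maximization over the constraint $0 \le u_t^h \le n_t^h$ preserves convexity in $n_t^h$ (this is a standard fact: the feasible set $\{(n,u): 0 \le u \le n\}$ is convex, so maximizing — here I need to be careful, since maximization of a convex function over a convex set is not generally convex in the parameter; the correct statement uses that the maximum of a convex function over a polytope is attained at a vertex, which dovetails with Step 2).

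\textbf{Step 2 (Bang-bang structure $u_t^{h,*} \in \{0, n_t^h\}$).} Here I would use that for fixed $n_t^h$, the map $u \mapsto g(n_t^h, u)$ is convex on the interval $[0, n_t^h]$, so its maximum over this interval is attained at an endpoint, i.e. at $u = 0$ or $u = n_t^h$. Convexity of $u \mapsto g(n_t^h,u)$ follows from joint convexity established in Step 1. With the optimal decision restricted to $\{0, n_t^h\}$, I can then recompute $V_t^h(n_t^h) = \max\{\,R^h(n_t^h,0) + \text{E}[V_{t+1}^h(n_t^h + X_{t+1}^h + \Internalarrival{h}{t+1})]\,,\; R^h(n_t^h, n_t^h) + \text{E}[V_{t+1}^h(X_{t+1}^h + \Internalarrival{h}{t+1})]\,\}$, and since the first branch is convex in $n_t^h$ and the second branch is constant in $n_t^h$, the max of two convex functions is convex — this closes the induction for Step 1 cleanly. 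The main subtlety I anticipate is making the endpoint argument rigorous on the integer lattice rather than over the reals, but convexity of a function on consecutive integers still forces the max over $\{0,1,\dots,n\}$ to sit at $0$ or $n$.

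\textbf{Step 3 (Threshold form under the piecewise-linear reward).} Define $\Delta_t(n) \coloneqq \big[R^h(n,n) + \text{E}V_{t+1}^h(\text{arrivals})\big] - \big[R^h(n,0) + \text{E}V_{t+1}^h(n + \text{arrivals})\big]$, the advantage of releasing everything over releasing nothing; then $u_t^{h,*} = n$ iff $\Delta_t(n) \ge 0$. I would show $\Delta_t$ is nondecreasing in $n$, so that $\{n : \Delta_t(n) \ge 0\}$ is an up-set, giving the threshold $\threshold{h}{t} \coloneqq \min\{n : \Delta_t(n) \ge 0\}$. For the specific reward \eqref{Eq:reward}: $R^h(n,n) = \max(0, \platooningbenefit{h}(n-1))$ and $R^h(n,0) = -cn$, so $R^h(n,n) - R^h(n,0) = \max(0,\platooningbenefit{h}(n-1)) + cn$, which is convex and nondecreasing in $n$; and $\text{E}V_{t+1}^h(\text{arrivals}) - \text{E}V_{t+1}^h(n+\text{arrivals})$ is nonincreasing? — no: I actually need the difference $-\text{E}V_{t+1}^h(n+\text{arrivals})$ combined with the reward difference to be monotone, and since $V_{t+1}^h$ is convex, $n \mapsto \text{E}V_{t+1}^h(n+\text{arrivals})$ is convex, but its negative is concave, so monotonicity of $\Delta_t$ is not automatic. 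The \textbf{main obstacle} is precisely this: I expect to need an auxiliary induction hypothesis — that $V_{t+1}^h(n) + cn$ is nondecreasing in $n$, or equivalently that the "marginal waiting cost" $c$ dominates the marginal value of an extra truck, which holds because an extra truck waiting can contribute at most the bounded platooning benefit later but always costs $c$ per step. I would carry this monotonicity claim alongside the convexity claim through the backward induction, verifying it at $t = T$ from the explicit form of $R^h$ and propagating it using the bang-bang recursion from Step 2. Once $\Delta_t$ is shown nondecreasing, the threshold $\threshold{h}{t}$ is well-defined and computable by the same DP that produces $V_t^h$, completing the proof.
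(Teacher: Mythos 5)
Your Steps 1 and 2 follow the paper's argument essentially verbatim: joint convexity of $G^h_t(n^h_t,u^h_t)=R^h(n^h_t,u^h_t)+\text{E}[V^h_{t+1}(n^h_t-u^h_t+X^h_{t+1}+\Theta^h_{t+1})]$, maximization of a convex function over the relaxed interval $[0,n^h_t]$ being attained at an endpoint, and then $V^h_t(n^h_t)=\max\{G^h_t(n^h_t,0),G^h_t(n^h_t,n^h_t)\}$ being a maximum of convex functions, which closes the induction. One slip there: the ``release all'' branch $R^h(n^h_t,n^h_t)+\text{E}[V^h_{t+1}(X^h_{t+1}+\Theta^h_{t+1})]$ is not constant in $n^h_t$ --- for the reward \eqref{Eq:reward} it grows like $b^h(n^h_t-1)$ --- but it is still convex, so your conclusion stands.

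The genuine gap is in Step 3. You correctly identify that monotonicity of $\Delta_t$ is the crux and is not automatic from convexity, but the auxiliary hypothesis you propose, that $V^h_{t+1}(n)+cn$ is nondecreasing, is the wrong inequality: it asserts $V^h_{t+1}(n+1)-V^h_{t+1}(n)\ge -c$, a \emph{lower} bound on the value increment, whereas
$\Delta_t(n+1)-\Delta_t(n)=b^h+c-\big(\text{E}[V^h_{t+1}(n+1+X^h_{t+1}+\Theta^h_{t+1})]-\text{E}[V^h_{t+1}(n+X^h_{t+1}+\Theta^h_{t+1})]\big)$
requires an \emph{upper} bound on that increment. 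The paper closes this by carrying the threshold structure itself through the backward induction: if a threshold $\rho^h_{t+1}$ exists at time $t+1$, then for $n\ge\rho^h_{t+1}$ the optimal action releases everything and $V^h_{t+1}(n+1)-V^h_{t+1}(n)=b^h$ exactly; since convex functions have nondecreasing increments, this forces $V^h_{t+1}(n+1)-V^h_{t+1}(n)\le b^h$ for \emph{all} $n$, whence $\Delta_t(n+1)-\Delta_t(n)\ge b^h+c-b^h=c>0$. Your verbal intuition (``an extra truck can contribute at most the platooning benefit'') points at exactly this bound, but your formalization does not deliver it, and you give no mechanism for establishing or propagating the correct bound through the induction; without that, Step 3 does not close.
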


\begin{proof}
	See Appendix \ref{Appendix1}.
\end{proof}
According to Theorem \ref{Theorem:first_hub1}, the optimal decision of the coordinator of the hub $h$ is to either to release all its truck or release no trucks. This property is due to the convexity of the reward function, and significantly reduces the search space of optimal release policies, thus leading to a  light computational load when solving the Bellman equation by DP. Moreover, a release policy with the structure in Theorem~\ref{Theorem:first_hub1} can be used as a low complexity approximate release policy when the truck arrivals are
dependent. In such cases, the dependent arrivals can be approximated with an independent arrival process. Then, an approximate release policy can be computed using the structure of Theorem~\ref{Theorem:first_hub1} and the approximated arrival process.

\subsection{Dependent arrival case}

The independent arrival assumption in the previous sub-section is valid if hub $h$ is the only hub in the hub-corridor or if all trucks join the corridor using hub $h$. The arrivals at hub $h$ will be dependent if, for example, the arriving trucks have participated in a platoon formation process at hub $h-1$. In order to use the solution in Theorem \ref{Theorem:first_hub1} for the dependent arrival case, we approximate the joint distribution of the truck arrival process with the joint distribution of an independent arrival process.  That is, we approximate the random variables $\Internalarrival{h}{0},\dots, \Internalarrival{h}{T}$ as independent. An approximate release policy can be computed, as in the previous subsection, using the approximate arrival process with independent arrivals.

Algorithm \ref{alg} shows the procedure for computing the single-hub approximate release policy,  where for each hub $h$, one at a time, we first compute the empirical distribution of its arrivals based on arrival data during several episodes (each episode  spanning over the time steps $0$ to $T$). To this end,  we divide the time steps $0$ to $T$ into  $N$ intervals denoted $\mathcal T_0,\dots, \mathcal T_N$. Then, we approximate the distribution of $\Internalarrival{h}{t}$, for $t\in \mathcal T_j$, using the following equation
\begin{equation}\label{Eq:approx}
\Pr(\Internalarrival{h}{t}=\internalarrival{}{})= \frac{\sum \limits_{s\in \mathcal T_j} \sum\limits_{e \in \mathcal E} \boldsymbol{1}_{\internalarrivalsample{h}{s}{e}=\internalarrival{}{}}   }{\sum \limits_{s\in \mathcal T_j} \sum\limits_{e \in \mathcal E} 1},
\end{equation}
where the set of episodes is denoted $\mathcal E=\{1,\dots,E\}$  and $E$ is the number of episodes, $\internalarrivalsample{h}{s}{e}$ is the number of trucks in episode $e$ that have traveled from hub $h-1$ and arrive at hub $h$ at time step $s$, and $\boldsymbol{1}_{\internalarrivalsample{h}{s}{e}=\internalarrival{}{}}$ is the indicator function such that $\boldsymbol{1}_{\internalarrivalsample{h}{s}{e}=\internalarrival{}{}}=0$ if $\internalarrivalsample{h}{s}{e}\neq \internalarrival{}{}$ and $\boldsymbol{1}_{\internalarrivalsample{h}{s}{e}=\internalarrival{}{}}=1$ if $\internalarrivalsample{h}{s}{e}=\internalarrival{}{}$. We then use the empirical distributions of  $\Internalarrival{h}{0}$, ...,  $\Internalarrival{h}{T}$ to compute an approximate release policy for hub $h$. Note that the empirical estimator can be applied to either simulated or real arrival data and recall that the distributions of the random variables $X_0^h,...,X_T^h$ are independent as these trucks have not participated in a platoon formation process at a preceding hub.

\begin{algorithm} 
	\SetKwInOut{Input}{input}\SetKwInOut{Output}{output}
	\Input{Distribution of  $\{X_t^h\}_{t,h}$ }
	\Output{Approximate release policies $\{\mu_t^{h}\}_{t,h}$}
	\BlankLine 
	Generate $\{x^1_{t,e}\}_{t,e}$ using the distribution of $\{X^1_{t}\}_t$\\
	Compute optimal release policy  $\{\mu_t^{1}\}_t$ using DP and Theorem~\ref{Theorem:first_hub1}\\
	Compute the releases $\{u_{t,e}^1\}_{t,e}$ using $\{\mu_t^{1}\}_t$    and $\{x^1_{t,e}\}_{t,e}$  for each episode $e$
	\BlankLine

	\For{$h=2,\dots,H $ }{ 
	Generate $\{x^h_{t,e}\}_{t,e}$ using the distribution of $\{X^h_{t}\}_t$\\
	Generate arrivals $\{\internalarrivalsample{h}{t}{e}\}_{t,e}$ using $\{u_{t,e}^{h-1}\}_{t,e}$ for each episode $e$\\
	Approximate the distribution of $\{\Internalarrival{h}{t}\}_{t}$ using  \eqref{Eq:approx}\\
	Compute approximate  release policy  $\{\mu_t^{h}\}_t$ using DP and Theorem \ref{Theorem:first_hub1} \\
Compute the releases $\{u_{t,e}^h\}_{t,e}$ using $\{\mu_t^{h}\}_t$,  $\{\internalarrivalsample{h}{t}{e}\}_{t,e}$ and $\{x^h_{t,e}\}_{t,e}$  for each episode $e$ 
	}

	\caption{Computation of the single-hub approximate release policies}
	\label{alg}
\end{algorithm}

\section{Decentralized release policy: \\two-hub approach}\label{Sec:secondhub}

In this section, we propose a two-hub approach for the decentralized platoon release problem in \eqref{Eq_problem_new} where the  release problem is decomposed into $H$ decoupled sub-problems.  Under  the two-hub approach, each hub finds its optimal release policy based on the release policy of its preceding hub. In the following subsections, we first study the structure of the optimal release policy of a hub when its preceding hub has independent arrivals and the preceding hub follows the release policy in Theorem~\ref{Theorem:first_hub1}. We will then  derive an approximate release policy for a hub when the arrivals to its preceding hub are not independent in time.

\subsection{Independent arrival case}


In this sub-section, we study the structure of the optimal release policy at hub~$h$ when the arrivals to hub~$h-1$ are independent in time and hub $h-1$ follows the optimal release policy in Theorem~\ref{Theorem:first_hub1}. The two-hub model is illustrated in  Fig.~\ref{Ill4}. The random variables $\Internalarrival{h-1}{0}, \dots, \Internalarrival{h-1}{T}, X_{0}^{h-1}, \dots, X_{T}^{h-1}$ are the independent arrivals at hub $h-1$. The arrivals of trucks at hub $h$ from hub $h-1$, \emph{i.e.,} denoted by the random variables $\Internalarrival{h}{0}, \dots, \Internalarrival{h}{T}$,  are dependent in time since hub $h-1$ follows the  optimal release policy in Theorem~\ref{Theorem:first_hub1}. The random variables $X_0^{h},\dots,X_T^{h}$ are independent as the trucks that join the hub-corridor using hub~$h$ have not participated in a platoon formation process at a preceding hub.

\begin{figure}
    \centering
    \includegraphics[width=9cm]{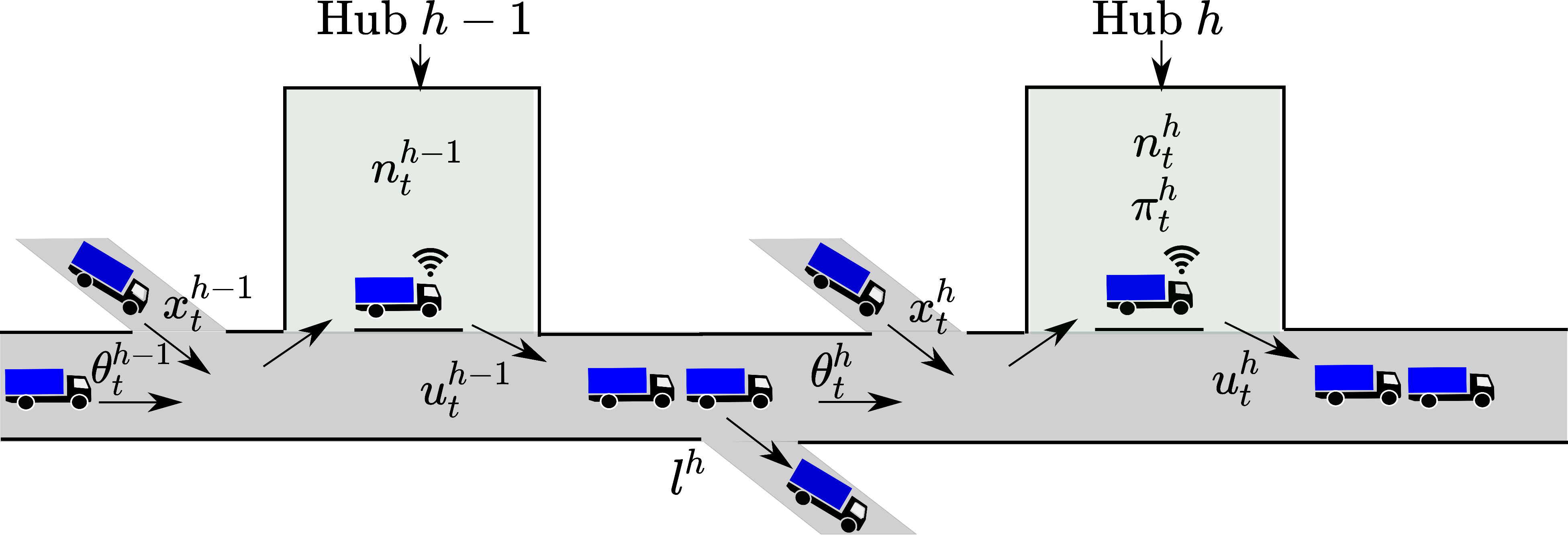}
    \caption{Two-hub model}
    \label{Ill4}
\end{figure}

 To study the structure of the optimal release policy at hub $h$, we derive the Bellman optimality equation.  Using the Bellman's principle of optimality,  the optimal value function can be expressed as
\begin{align}\label{Eq:value2}
&V^h_t(n_t^h,\filter{h}{t})=  \\ \nonumber & \hspace{0.5cm} \underset{ u_t^h \in \mathcal U^h_t(n_t^h)}{\text{max}} \ R^h(n_t^h,u_t^h) + \\ \nonumber & \hspace{1.2cm} \text{E} [V^h_{t+1}(n_t^h-u_t^h+X_{t+1}^h+\Internalarrival{h}{t+1},\filter{h}{t+1}],
\end{align}
where $\filter{h}{t}~=~\{\filterj{h}{t}{j}\}_{j\in \mathbb{Z}_{\geq 0}}$, and  $\filterj{h}{t}{i}$ is defined as
\begin{equation}
  \filterj{h}{t}{i}=\Pr(D_{t-\traveltime{h}}^{h-1}=i|\Internalarrival{h}{0}=~\internalarrival{h}{0}, \dots, \Internalarrival{h}{t}=~\internalarrival{h}{t}),  
\end{equation}
where $D_{t-\traveltime{h}}^{h-1}=N_{t-\traveltime{h}}^{h-1}-U_{t-\traveltime{h}}^{h-1}$ is the number of trucks at hub $h-1$ at time step $t-\traveltime{h}$ after releasing, and $\traveltime{h}$ is the travel time between the hubs $h-1$ and $h$. Thus, $\filterj{h}{t}{i}$ is the likelihood that $i$ trucks are at hub $h-1$ after releasing at time step $t-\traveltime{h}$ given the observed arrivals at hub $h$. The variable $\filter{h}{t+1}=\{\filterj{h}{t+1}{j}\}_{j\in \mathbb{Z}_{\geq 0}}$  can be computed recursively as
\begin{equation}\label{Eq:filter}
\filterj{h}{t+1}{j}=\frac{\sum_{i\geq 0} \filterj{h}{t}{i} \transprob{h}{t}{i}{j} \obsprob{h}{t}{i}{j}{\internalarrival{h}{t+1}} }{\sum_{i,j \geq 0} \filterj{h}{t}{i} \transprob{h}{t}{i}{j} \obsprob{h}{t}{i}{j}{\internalarrival{h}{t+1}} },
\end{equation} 
where $\transprob{h}{t}{i}{j}=\Pr(D_{t-\traveltime{h}+1}^{h-1}=j|D_{t-\traveltime{h}}^{h-1}=i)$ and $\obsprob{h}{t}{i}{j}{\internalarrival{h}{t+1}}~=~\Pr(\Internalarrival{h}{t+1}=\internalarrival{h}{t+1}|D_{t-\traveltime{h}+1}^{h-1}=j,D_{t-\traveltime{h}}^{h-1}=i)$. The probabilities $\transprob{h}{t}{i}{j}$ and $\obsprob{h}{t}{i}{j}{\internalarrival{h}{t+1}}$ are determined by the truck arrival distributions, the likelihood of leaving the corridor between the hubs, and the release policy of hub $h-1$, and their precise forms  are given in Appendix \ref{app_estimator}.   The optimal value function at the terminal time step is  $V^h_T(n_T^h,\filter{h}{T})~=~R^h(n_T^h,n_T^h)$.

 


The Bellman optimality equation in the form of \eqref{Eq:value2} is difficult to solve by DP since  $\filter{h}{t}$ is a continuous variable. In the rest of this subsection, we will derive the structure of the optimal value function and the optimal release policy that allows us to easily solve the Bellman optimality equation by DP. To this end,  let $w_t^h\in \mathbb{Z}_{\geq 0}$ denote the number of time steps since a non-zero truck arrival at hub $h$ from hub $h-1$, which can be computed recursively as
\begin{equation*}
w_{t+1}^h= \begin{cases}
w_{t}^h+1 &\text{if }  \internalarrival{h}{t+1}= 0 \\   0&\text{if } \internalarrival{h}{t+1}\neq 0.
\end{cases}
\end{equation*}

\begin{theorem}\label{Theorem:second_hub2}

	Consider the decentralized platoon release problem  at hub $h$ when the truck arrivals at hub $h-1$ are independent over time and its coordinator follows the release policy in Theorem \ref{Theorem:first_hub1}. Then, the following statements  hold:
\begin{itemize}
	\item   The optimal value function only depends on $n_t^h$ and $w_t^h$.
\item The value function $V_t^h(n_t^h, w_t^h)$ is convex in $n_t^h$.
	\item  The optimal release decision $ u_t^{h,*}$ takes  values in the set $\{0,n_t^h\}$.

\end{itemize}
Furthermore, if the reward function has the piecewise-linear form in \eqref{Eq:reward}, then there exists a threshold $\threshold{h}{t}(w_t^h)$, such that 
\begin{equation*}
u_t^{h,*}=\begin{cases}
n_t^h &\text{if } n_t^h\geq \threshold{h}{t}(w_t) \\
0 &\text{if } n_t^h< \threshold{h}{t}(w_t). \\
\end{cases}
\end{equation*}
The threshold  $\threshold{h}{t}(w_t^h)$ can be computed using dynamic programming.
\end{theorem}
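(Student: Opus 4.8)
\emph{Proof sketch.} The plan has two parts: (i) collapse the infinite–dimensional belief $\filter{h}{t}$ appearing in \eqref{Eq:value2} to the scalar $w_t^h$, and (ii) for each fixed value of $w_t^h$, replay the dynamic–programming argument behind Theorem~\ref{Theorem:first_hub1}, carrying $w_t^h$ along as an uncontrolled exogenous mode.

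For part (i), the key observation is that, since hub $h-1$ uses the threshold policy of Theorem~\ref{Theorem:first_hub1}, it either releases no trucks or releases \emph{all} of them; hence its post–release state obeys $D^{h-1}_{s}=0$ at every time $s$ at which a non-empty platoon departs. Because a truck leaving the corridor can only shrink a platoon, observing $\internalarrival{h}{t+1}\neq 0$ at hub $h$ certifies that a non-empty platoon left hub $h-1$ at time $t+1-\traveltime{h}$, so $D^{h-1}_{t+1-\traveltime{h}}=0$ with probability one, and the filter recursion \eqref{Eq:filter} collapses $\filter{h}{t+1}$ to the point mass at $0$. From a point mass, every subsequent update in \eqref{Eq:filter} conditioned on $\internalarrival{h}{}=0$ is a \emph{deterministic} map, since the arrival distributions at hub $h-1$, the exit likelihood $l^h$, and hub $h-1$'s threshold policy — hence $\transprob{h}{t}{i}{j}$ and $\obsprob{h}{t}{i}{j}{0}$ — are all known. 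By induction $\filter{h}{t}$ is therefore a deterministic function $f_t$ of the number of consecutive zero observations since the last reset, i.e. $\filter{h}{t}=f_t(w_t^h)$; started from the prior, the same recursion covers the stretch before the first non-zero arrival. Substituting into \eqref{Eq:value2} shows the value function depends on the observation history only through $(n_t^h,w_t^h)$, which is the first bullet. Moreover $(n_t^h,w_t^h)$ is a controlled Markov chain whose $w$–component and whose arrival increment $X_{t+1}^h+\Internalarrival{h}{t+1}$ have a joint law that is a function of $w_t^h$ alone and does not depend on $(n_t^h,u_t^h)$.

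For part (ii), the Bellman equation reduces to
\[
V_t^h(n,w)=\max_{u\in\{0,\dots,n\}} R^h(n,u)+\text{E}\!\left[V_{t+1}^h\big(n-u+X_{t+1}^h+\Internalarrival{h}{t+1},\,w'\big)\right],
\]
with $w'=(w+1)\mathbf 1_{\{\internalarrival{h}{t+1}=0\}}$, and we argue by backward induction on $t$ with hypothesis ``$V_{t+1}^h(\cdot,w)$ is convex, with increments at most $\platooningbenefit{h}$, for every $w$'' (the increment bound is needed only for the last claim; convexity alone gives bullets two and three and holds for any convex reward). Since the joint law of $(X_{t+1}^h,\Internalarrival{h}{t+1},w')$ does not involve $(n,u)$, the term $J_{t+1}(m;w):=\text{E}[V_{t+1}^h(m+X_{t+1}^h+\Internalarrival{h}{t+1},w')]$ is convex in $m$ with increments at most $\platooningbenefit{h}$. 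Writing $m=n-u$, the objective becomes $R^h(n,n-m)+J_{t+1}(m;w)$, jointly convex in $(n,m)$, so its maximum over $m\in\{0,\dots,n\}$ is attained at an endpoint, giving $u_t^{h,*}\in\{0,n_t^h\}$ and $V_t^h(n,w)=\max\{R^h(n,n)+J_{t+1}(0;w),\ R^h(n,0)+J_{t+1}(n;w)\}$, which is convex in $n$ and (a short check, exactly as for Theorem~\ref{Theorem:first_hub1}) has increments at most $\platooningbenefit{h}$; the terminal case $V_T^h(n,w)=R^h(n,n)$ satisfies the hypothesis, closing the induction. For the piecewise–linear reward \eqref{Eq:reward}, releasing all is optimal exactly when $A_w(n):=\platooningbenefit{h}\max(0,n-1)+cn+J_{t+1}(0;w)-J_{t+1}(n;w)\ge 0$; since $J_{t+1}(\cdot;w)$ is convex with increments $\le\platooningbenefit{h}$, the increments of $A_w$ satisfy $A_w(n)-A_w(n-1)=\platooningbenefit{h}+c-\big(J_{t+1}(n;w)-J_{t+1}(n-1;w)\big)\ge c>0$ for $n\ge2$, so $A_w$ is non-decreasing and eventually positive on $n\ge1$, yielding an integer threshold $\threshold{h}{t}(w)$ with the stated policy, computable by the backward recursion.

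The main obstacle is part (i): rigorously pinning the posterior $\filter{h}{t}$ to $w_t^h$. This is precisely where the threshold structure of Theorem~\ref{Theorem:first_hub1} and the time–independence of hub $h-1$'s arrivals are indispensable, and one must check that the binomial thinning induced by $l^h$ does not break the implication ``non-zero observation $\Rightarrow$ hub $h-1$ empty'' — it does not, since thinning only removes trucks. Once the belief state is reduced, part (ii) and the threshold claim are essentially a rerun of the proof of Theorem~\ref{Theorem:first_hub1}.
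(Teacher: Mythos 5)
Your proposal is correct and follows essentially the same route as the paper's proof: the all-or-nothing structure of hub $h-1$'s policy forces the belief $\filter{h}{t}$ to reset to the point mass at zero on any non-zero observation and to evolve deterministically between resets, so it is a function of $w_t^h$ alone, after which the convexity/endpoint/threshold induction of Theorem~\ref{Theorem:first_hub1} is replayed with $w_t^h$ carried as an exogenous state. Your write-up is in fact more explicit than the paper's (which omits most of part (ii) as ``similar to Theorem~\ref{Theorem:first_hub1}''), and the points you flag --- the deterministic Bayes propagation under zero observations and the harmlessness of the binomial thinning by $l^h$ --- are exactly the ingredients the paper relies on implicitly.
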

\begin{proof}
	See Appendix \ref{Appendix4}. 
\end{proof}

Theorem \ref{Theorem:second_hub2} implies that the optimal release policy of hub~$h$ in the two-hub approach can be computed by solving the Bellman optimality equation using standard DP using $w_t^h$ instead of $\filter{h}{t}$. Note that $\filter{h}{t}$ is a continuous random variable which significantly complicates the computation of the optimal policy using DP. However, based on Theorem \ref{Theorem:second_hub2}, the optimal release policy can be computed by using $w_t^h$ which is a discrete variable. This significantly facilitates the computation of the optimal policy. Theorem \ref{Theorem:second_hub2} also indicates that similar  structural results continue to hold for the optimal release policy as those in Theorem~\ref{Theorem:first_hub1} when the arrivals at hub $h-1$ are independent and the coordinator at hub $h-1$ follows a release policy with the structure in  Theorem~\ref{Theorem:first_hub1}. This result is useful as it limits the candidate optimal release policies when seeking an optimal release policy by DP.

\subsection{Dependent arrival case}

The two-hub solution in the previous  subsection  is optimal under the assumptions that the arrivals at the preceding hub $h-1$ are independent and its coordinator follows the release policy in Theorem~\ref{Theorem:first_hub1}.  The independent-arrival assumption at hub $h-1$ is valid if hub $h-1$ is the first hub in the corridor. In this subsection, we propose an approximate two-hub solution for hub $h$ by approximating the joint distribution  of the arrival process at hub $h-1$ using independent arrivals. That is, we approximate the random variables $\Internalarrival{h-1}{0},\dots, \Internalarrival{h-1}{T}$ as independent. The approximate distribution allows us to   use Theorem~\ref{Theorem:second_hub2} to obtain an approximate release solution for the dependent case. The approximate two-hub release policy at hub $h$ is computed using the approximate arrival distribution at hub $h-1$ and DP, as explained in the previous subsection. 

Algorithm \ref{alg2} shows the procedure for computing the two-hub approximate release policy,  where for each hub $h$, we first compute the empirical distribution of the arrivals at hub $h-1$ using equation \eqref{Eq:approx}. The empirical distribution of the arrivals is then used as an approximation of the arrival distribution at hub $h-1$.   The empirical distribution of $\Internalarrival{h-1}{0},\dots, \Internalarrival{h-1}{T}$ is used to compute an approximate release policy for hub $h$.

\begin{algorithm} 
	\SetKwInOut{Input}{input}\SetKwInOut{Output}{output}
	\Input{Distribution of  $\{X_t^h\}_{t,h}$ }
	\Output{Approximate release policies $\{\mu^h_t\}_{t,h}$}
	\BlankLine
	Generate $\{x^1_{t,e}\}_{t,e}$ using the distribution of $\{X^1_{t}\}_t$\\
	Compute optimal release policy $\{\mu^1_t\}_t$ using DP and Theorem~\ref{Theorem:first_hub1} \\
   Generate releases $\{u_{t,e}^1\}_{t,e}$ using  $\{\mu^1_t\}_t$  and $\{x^1_{t,e}\}_{t,e}$ for each episode $e$ \\
   Generate $\{x^2_{t,e}\}_{t,e}$ using the distribution of $\{X^2_{t}\}_t$\\
	Compute optimal release policy  $\{\mu^2_t\}_t$ using DP and Theorem~\ref{Theorem:second_hub2}

	\BlankLine

	\For{$h=3,\dots,H $ }{ 
	Generate $\{x^h_{t,e}\}_{t,e}$ using the distribution of $\{X^h_{t}\}_t$\\	
Generate  arrivals $\{\internalarrivalsample{h-1}{t}{e}\}_{t,e}$ using $\{u_{t,e}^{h-2}\}_{t,e}$\\
	Approximate distribution of $\{\Internalarrival{h-1}{t}\}_{t}$ using  \eqref{Eq:approx}\\
	Compute approximate release policy  $\{\mu^h_t\}_t$ using DP and
Theorem 2\\
Generate  releases $\{u_{t,e}^{h-1}\}_{t,e}$ using $\{\mu^{h-1}_t\}_t$, $\{\internalarrivalsample{h-1}{t}{e}\}_{t,e}$ and $\{x^{h-1}_{t,e}\}_{t,e}$  for each episode $e$}

	\caption{Computation of the two-hub approximate release policies }
	\label{alg2}
\end{algorithm}

\section{Distributed and centralized release policies}\label{Sec:Centsol}

In this section, we propose distributed and centralized release policies, which will be used as benchmarks in Section~ \ref{Sec:simulation} to evaluate the performance of the proposed decentralized release policies and investigate the value of communication among hubs. Under the distributed release policy, each coordinator informs its proceeding coordinator in the hub-corridor about its release decisions. Under the centralized release policy,  coordinators share their release decisions and have \emph{a priori} knowledge of their arrivals.

\subsection{Distributed case}

In the distributed case, the coordinator at each hub receives the release decisions of its preceding hub and computes its release decisions using a receding horizon solution with $L$ time steps as the horizon. The distributed release policy requires communication links between hubs which will require installation and maintenance, whereas the decentralized release policies do not require communication among hubs. Let $\mathcal M^h_t= \{ u^{h-1}_{t-\traveltime{h}+1},\dots,u^{h-1}_{t-\traveltime{h}+L}\}$ denote the release decisions of hub $h-1$ which are known by hub $h$  at time step $t$. Then, the Bellman optimality equation for hub $h$ can be expressed as 
\begin{align*}
&V^h_s(n_s^h,\mathcal M^h_t)=\underset{ u_s^h \in \mathcal U^h_s(n_s^h)}{\text{max}} \ R(n_s^h,u_s^h) + \\ & \hspace{2cm} \text{E} [V^h_{s+1}(n_s^h-u_s^h+X_{s+1}^h+\Internalarrival{h}{s+1},\mathcal M^h_t)|\mathcal M^h_t ],
\end{align*}
for $s=t,\dots,t+L-1$, and the value function at the terminal time step of the horizon is  $V^h_{t+L}(n_{t+L}^h) = R^h(n_{t+L}^h,n_{t+M}^h)$. The Bellman optimality equation is solved recursively by DP. Then, the computed release decision for time step $t$ is implemented, while the set of releases at the preceding hub and the computed release decisions for time steps $t+1,\dots,t+L$ are updated at the next step.



\subsection{Centralized case}

In the centralized case, the coordinator at each hub computes its release decisions using a receding horizon solution where the arrivals within the horizon of $L$ time steps are known. The arrivals are known as the coordinators sharing their release decisions in the centralized case, and the trucks inform the coordinators about their arrivals beforehand. The centralized release policy requires knowledge of the schedule and the route of trucks and communication between hubs. However, the decentralized release policies do not rely on any information exchange between hubs. The set of information available to  the coordinator of hub $h$  at time step~$t$ is denoted as $\mathcal N^h_t= \{ x^{h}_{t+1},\dots,x^{h}_{t+L}, \internalarrival{h}{t+1} ,\dots,\internalarrival{h}{t+L} \}$. The Bellman optimality equation used by the coordinator at hub $h$ to compute the release decision at time step $t$ is expressed as
\begin{align*}
&V^h_s(n_s^h,\mathcal N^h_t)=\underset{ u_s^h \in \mathcal U^h_s(n_s^h)}{\text{max}} \ R(n_s^h,u_s^h) + \\ & \hspace{3.5cm} V^h_{s+1}(n_s^h-u_s^h+x_{s+1}^h+\internalarrival{h}{s+1},\mathcal N^h_t),
\end{align*}
for $s=t,\dots,t+L-1$, and the value function at the terminal time step of the horizon is  $V^h_{t+L}(n_{t+L}^h) = R^h(n_{t+L}^h,n_{t+L}^h)$. The Bellman optimality equation is solved recursively by DP, and similar to the distributed case,  the release decision for time step $t$ is implemented, while the set of information and the release decisions for time steps $t+1,\dots,t+L$ are updated at the next time step.

\section{Simulation study}\label{Sec:simulation}

In this section, we perform a simulation study over a hub-corridor with three hubs in northern Sweden.  We first explain the simulation setup, including the hub locations and arrival distributions. Then, we study the performance of the developed coordination solutions under deterministic travel times, which is accurate if the impact from surrounding traffic on the highway is ignored. We also study the impact of  uncertainty in travel times between the hubs on the performance of the coordination policies.

\subsection{Setup}

 Fig.~\ref{Fig:corr} shows a hub corridor between  Lule{\aa} and Sundsvall in northern Sweden. The hubs are located near the cities Lule{\aa}, Skellefte{\aa}, Ume{\aa}, and the hub-corridor ends near the city of Sundsvall, Sweden. The length of the road segments connecting the hubs are $131$~km, $136$~km, and $263$ km, and we assume that trucks travel with a speed of $80$ km/h under free-flow conditions.  To obtain realistic arrival distributions for the trucks that join the corridor, we use the real data shown in Fig. \ref{Fig:corr2} which were collected during two working days in 2018--2019 by the Swedish Transport Administration \cite{TV2022}. This figure shows the average hourly truck count on roads that connect to each hub location in Fig.~\ref{Fig:corr} (excluding the roads connecting Lule{\aa}, Skellefte{\aa}, Ume{\aa}, and Sundsvall).

\begin{figure}
	\centering
	\includegraphics[width=4.5cm]{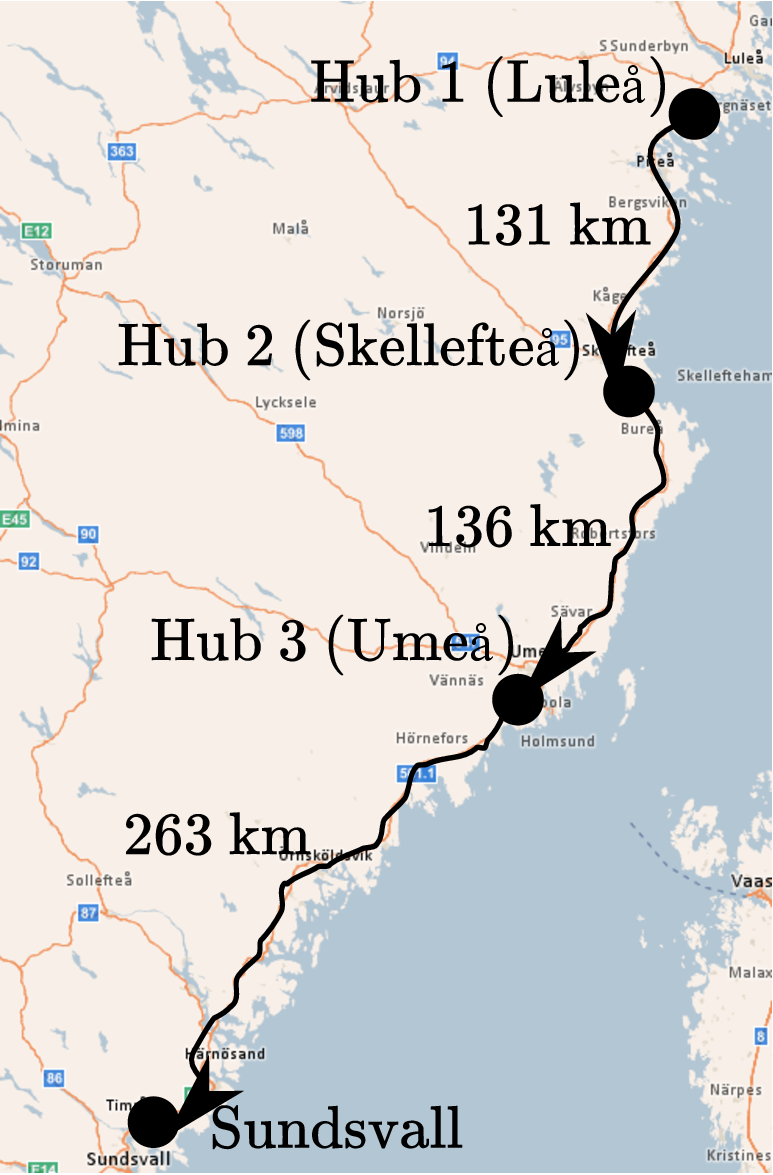}
	\caption{Three-hub-corridor along a highway in Sweden.}
	\label{Fig:corr}
	
\end{figure}

\begin{figure}
	\centering
	\includegraphics[width=8.3cm]{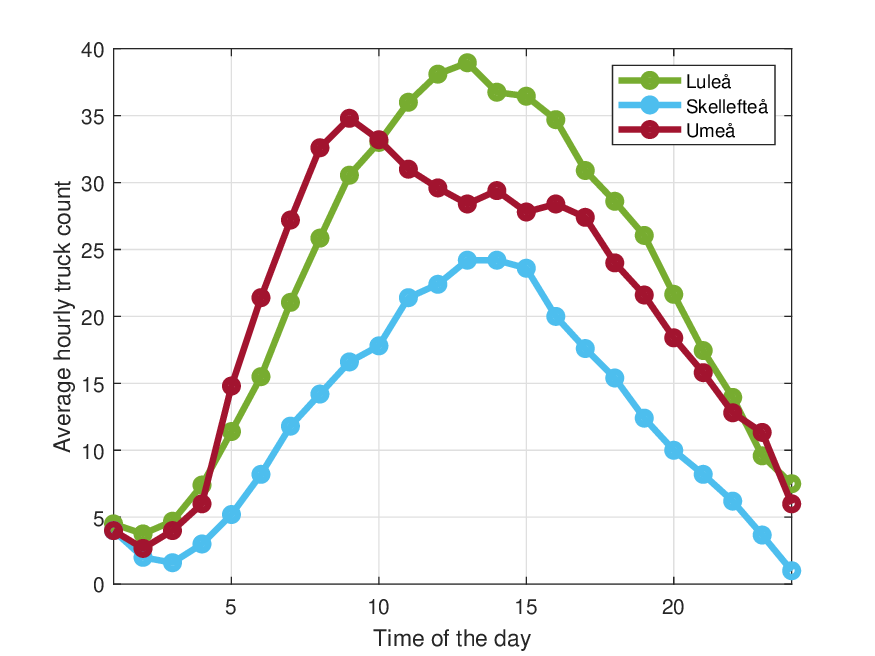}
	\caption{The average hourly truck count on roads connecting to Lule{\aa}, Skellefte{\aa}, and Ume{\aa} (excluding roads connecting Lule{\aa}, Skellefte{\aa}, Ume{\aa}, and Sundsvall). The data was collected by the Swedish Transport Administration \cite{TV2022}. }
	\label{Fig:corr2}
	
\end{figure}

In our simulations, we assume that the number of trucks joining the corridor using hub $h$ at time $t$, \emph{i.e.}, $X_t^h$ is Poisson distributed with a time-varying mean. More precisely,  $X^{h}_{t} \sim \text{Poisson}(\lambda^h_t)$, for $t=~0,\dots, T$, where the mean of the process, \emph{i.e.},  $\lambda^h_t$, is set according to the data in Fig. \ref{Fig:corr}.  The time step length is set to one minute  in the simulations.  We will study the performance of the proposed decentralized policies under both known travel-times and uncertain travel-times.


We use the reward function in \eqref{Eq:reward} for our simulations, where the platooning benefit ($b^h$) is $10 \%$ fuel consumption reduction for the follower trucks. The cost of fuel is assumed to be  $5$ SEK/km in our simulations. The waiting cost per truck in our simulations is equal the hourly driver cost,  which is around 200 SEK/h  \cite{Trafikanalys2016}, or approximately $c=3.33$ SEK per time step. The following results are generated by~$50$~Monte Carlo simulations.



\subsection{Deterministic travel times scenario}

In this subsection, we study the performance of different platoon release policies in the deterministic travel time scenario where the travel times between hubs are known \emph{a priori} and set to the free-flow travel times. Fig. \ref{Fig:Profit} shows the (hourly) average reward in a period of 24 hours under different release policies. According to this figure, the decentralized two-hub solution generally results in a higher average hourly reward than the decentralized single-hub solution. This is because the former policy takes the arrival process and release behavior of its preceding hub into account. Fig. \ref{Fig:Profit} also shows that the distributed and centralized release policies outperform the decentralized release policies by a small margin. This indicates that the performance loss due to the decentralized structure of the one-hub and two-hub solutions is relatively small.

\begin{figure}
\centering
\subfloat[First hub]{\label{Fig:Prof1}	\includegraphics[width=9cm]{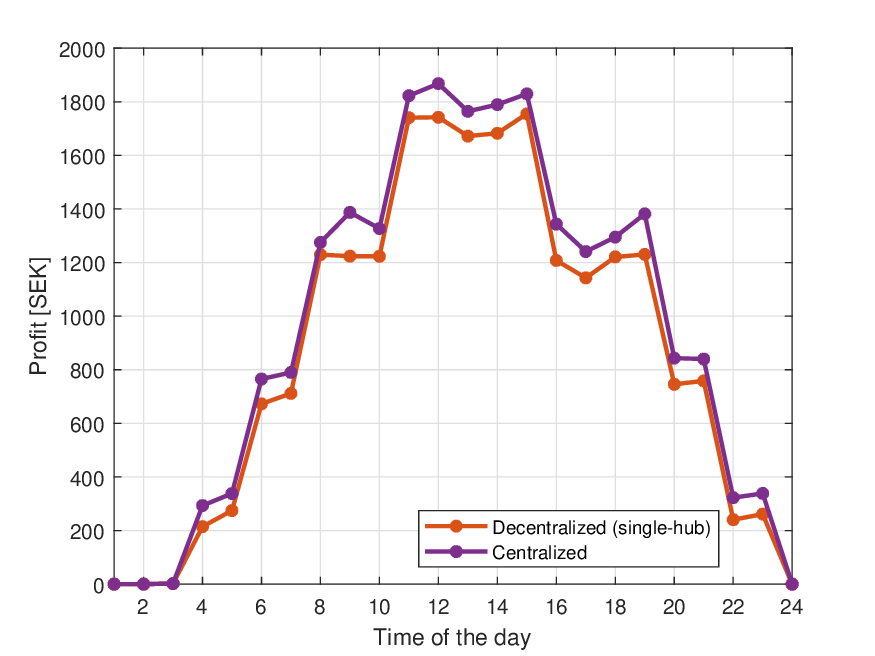}} \\

\subfloat[Second hub]{\label{Fig:Prof2} \includegraphics[width=9cm]{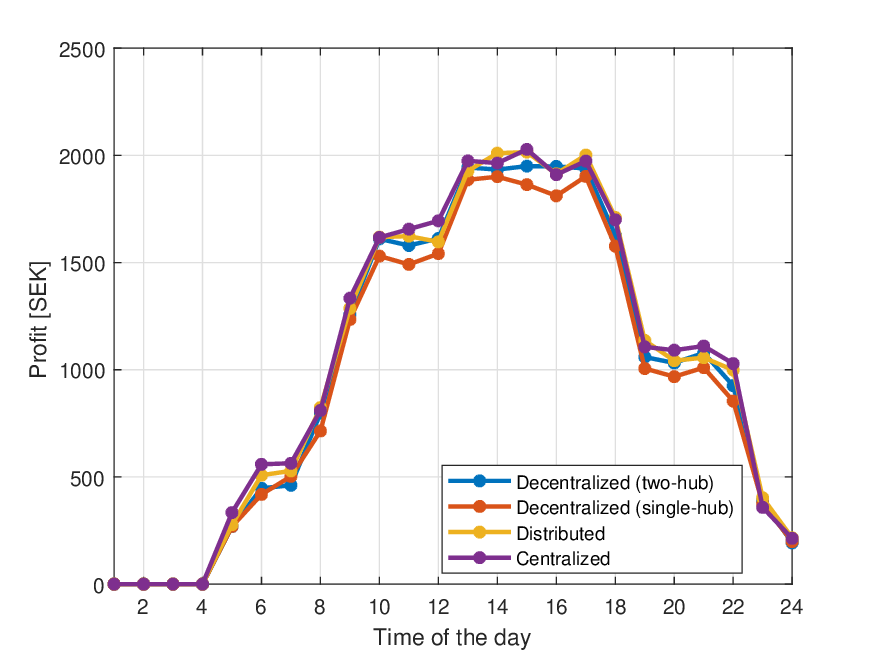}} \\

\subfloat[Third hub]{\label{Fig:Prof3}
	\includegraphics[width=9cm]{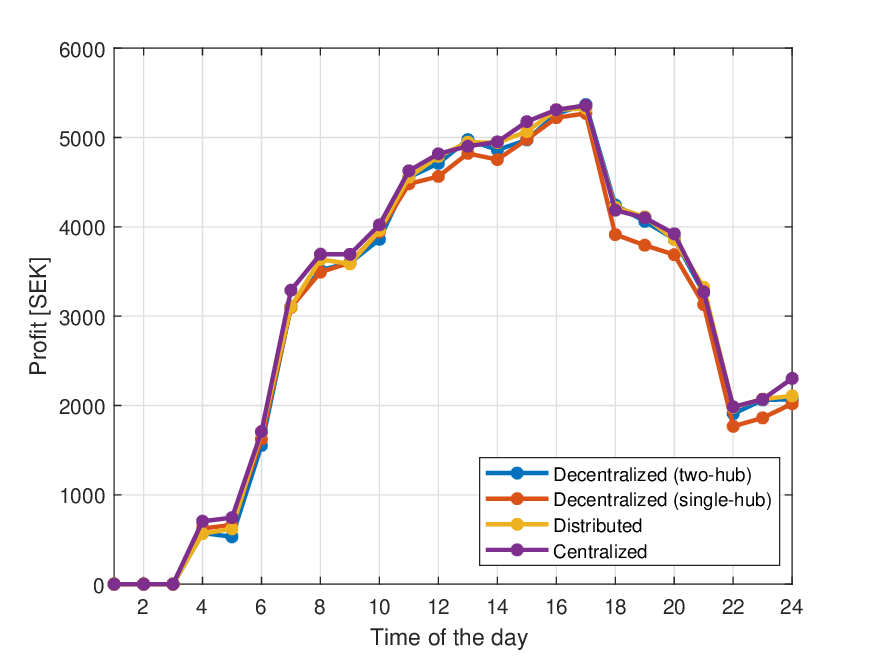}}
 \caption{Profits as a function of time. In the simulations,  the likelihood of any released truck to leave the corridor before the next hub is set to $l^h=0.5$.}
 \label{Fig:Profit}
\end{figure}


Fig. \ref{Fig:PZ} shows the (hourly) average number of released trucks and its standard deviation for each hub. The number of released trucks is referred to as release size in this figure. According to this figure, more trucks are released on average under the decentralized two-hub policy than the decentralized single-hub policy. The standard deviation of the release sizes is also higher for the decentralized two-hub policy. The increased variability in release sizes is due to that the threshold for the decentralized two-hub policy depends on the variable $w_t^h$, \emph{i.e.,} the number of time steps since a non-zero arrival was observed, as shown in Theorem \ref{Theorem:second_hub2}. Fig. \ref{Fig:PZ} also shows that the distributed and centralized policies generally have larger release sizes and higher standard deviations than the decentralized policies. This implies that the centralized and distributed policies are more efficient in forming platoons than the decentralized policies.

\begin{figure}
	\centering
	\subfloat[First hub]{\label{Fig:PZ1}	\includegraphics[width=9cm]{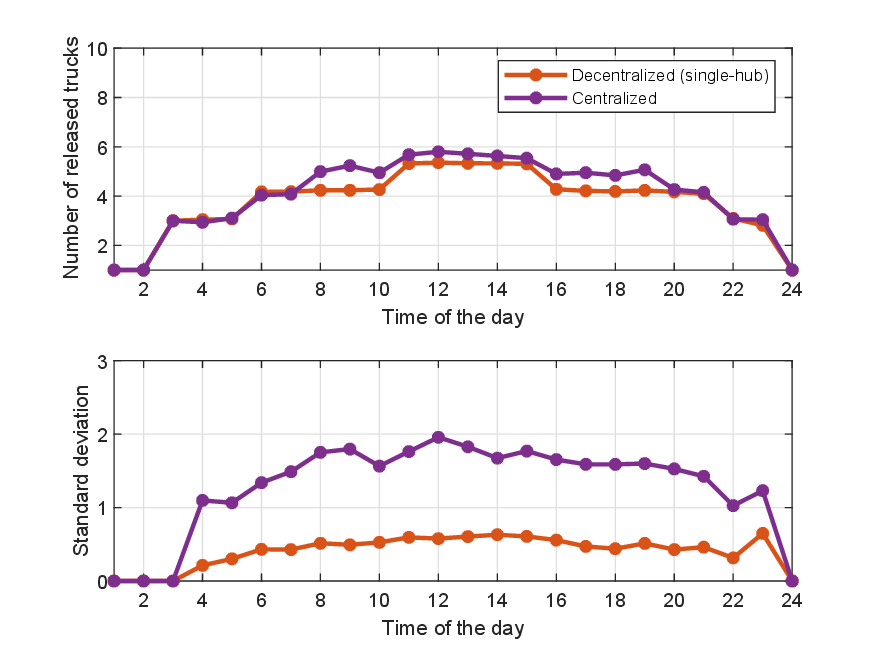}} \\
	
	\subfloat[Second hub]{\label{Fig:PZ2} \includegraphics[width=9cm]{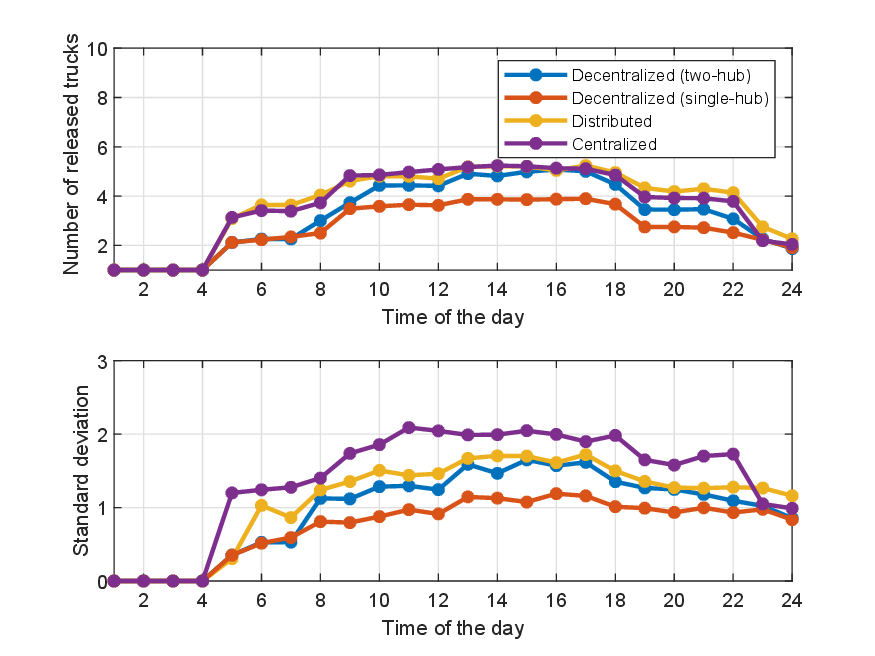}} \\
	
	\subfloat[Third hub]{\label{Fig:PZ3}
		\includegraphics[width=9cm]{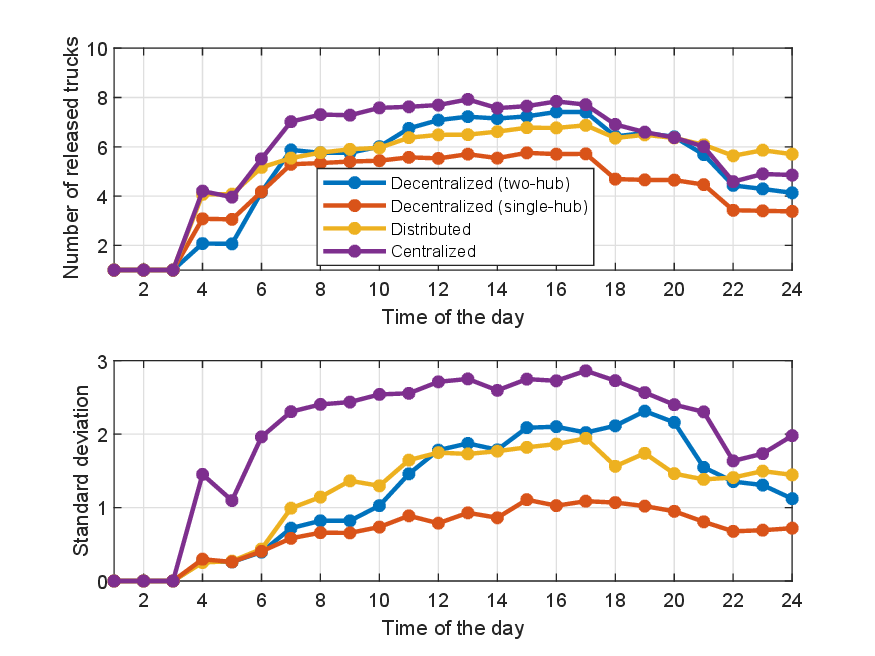}}
	\caption{Platoon length and its standard deviation as a function of time. In the simulations,  the likelihood of any released truck to leave the corridor before the next hub is set to $l^h=0.5$.}
	\label{Fig:PZ}
\end{figure}

Fig. \ref{Fig:SensP} shows the average daily reward generated at the hubs as a function of the variable $l^h$, \emph{i.e.}, the likelihood that a truck leaves the hub-corridor before reaching the next hub in the hub-corridor. Based on this figure, the daily reward decreases as the variable $l^h$ increases. This is because the trucks will visit fewer hubs on average as $l^h$ becomes large, reducing the number of trucks participating in platoons.   Moreover, Fig.~ \ref{Fig:SensP} shows that the difference between the decentralized and distributed policies is small when $l^h$ becomes large.  This figure also shows that the centralized release policy achieves higher rewards even when $l^h$ is high. The largest difference between the decentralized and centralized policies is $8\%$ and occurs at $l^h=0.8$.


\subsection{Uncertain travel times scenario}

In this subsection, we will study the impact of travel-time uncertainty on the performance of different release policies. To this end, we model the travel times between the hubs as truncated Gaussian distributed random variables with the free-flow travel times between  hubs as the mean of the distribution. The standard deviation of the travel times is varied to study the impact of travel time uncertainty. Fig. \ref{Fig:SensTTun} shows the average daily reward of the hubs as a function of the standard deviation of the travel-times. According to Fig. \ref{Fig:SensTTun}, the reward slightly decreases with increased uncertainty in travel times. The performance loss of decentralized policies is less than $2\%$ when the standard deviation of the travel-time is $10$ minutes. Also,  the performance loss of the distributed and centralized policies is less than $5\%$  when the standard deviation of the travel time is $10$ minutes.


\begin{figure}[t]
	\centering
	\includegraphics[width=9cm]{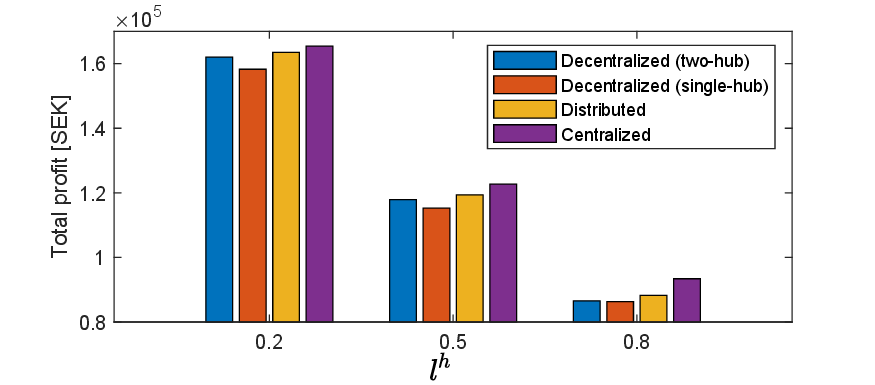}
	\caption{Total profit as a function of the likelihood of any released truck to leave the corridor before the next hub.}
	\label{Fig:SensP}
	
\end{figure}

\begin{figure}[t]
	\centering
	\includegraphics[width=9cm]{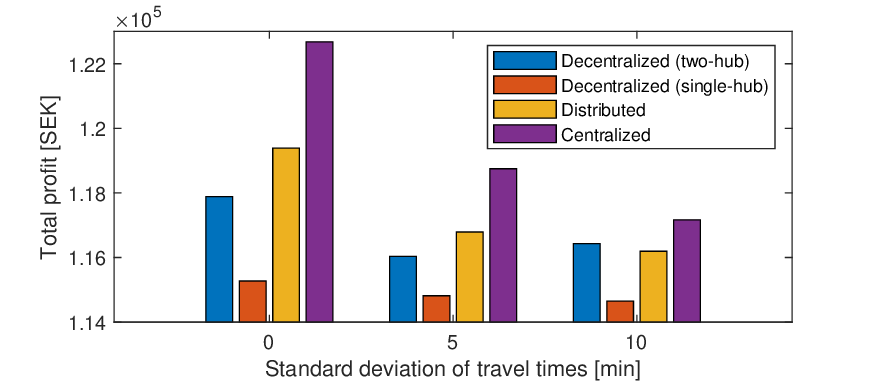}
	\caption{Total profit as a function of the  standard deviation of the travel times. In the simulations,  the likelihood of any released truck to leave the corridor before the next hub is set to $l^h=0.5$.}
	\label{Fig:SensTTun}
	
\end{figure}

\section{Conclusion and future work} \label{Sec:conclusions}

This paper studied the platoon release problem in a hub-corridor, in which release times at hubs are coordinated to form platoons. We focused on the decentralized case without communication between coordinators, where the release decisions are based on statistical information of truck arrivals. Under this information structure, we proposed two different decentralized release policies. The first decentralized release policy was shown to be optimal when its arrivals from the preceding hub are independent. The second decentralized release policy is shown to be optimal when the preceding hub has independent arrivals and follows an optimal release policy. In order to evaluate the value of communication among hubs, we also proposed distributed and centralized release policies.

A simulation study of a hub-corridor with three hubs in Sweden was performed to compare the proposed release policies and evaluate the value of sharing information. The simulation study showed that the decentralized release policies achieved profits close to that of the distributed and centralized. More precisely, the profits of the decentralized policy was less than $3.5\%$ and $8\%$ lower than of the distributed and centralized, respectively. This suggests that a viable platooning system can be obtained without sharing truck routes and schedules across the system, which may otherwise be one of the bottlenecks for platoon cooperation among different transportation companies. Moreover, the simulations showed that the likelihood of leaving the system between hubs heavily  affects the platoon  profits of the proposed release policies. For instance, when the likelihood  of leaving the corridor between hubs is high, the benefit of exploiting trucks from the preceding neighboring hub is limited.

For future research, we aim to study platoon release problems in networks with general typologies, expanding the line graph (hub-corridor) considered in this paper.  A possibility is to include platoon routing in the decision of the hub coordinators.  Another promising future research direction is to use reinforcement learning to seek approximate solutions when the truck arrival process is unknown. Furthermore, another avenue for future work is to investigate if similar properties shown for the platoon release problem in this paper hold when using a more realistic form of the reward function that, for example, captures the influence of traffic and heterogeneous truck properties.   Final possible future extensions are to use a traffic simulator, for example, SUMO, to model the travel time uncertainty in the simulation and to include hub and road capacity constraints in the problem formulation.



\appendices
\section{Proof of Theorem \ref{Theorem:first_hub1}}\label{Appendix1}

Before providing the proof, we define
\begin{equation*}
G^h_t(n_t^h,u_t^h)=R^h(n_t^h,u_t^h) + \text{E} [V^h_{t+1}(n_t^h-u_t^h+X_{t+1}^h+\Internalarrival{h}{t+1})],
\end{equation*}
and note that 
\begin{equation}\label{Eq: proof_1}
V^h_t(n^h_t)= \underset{u_t^h\in \mathcal U_t^h(n_t^h) }{\text{max}} G^h_t(n_t^h,u_t^h).
\end{equation}

First, assume that $V^h_{t+1}(n^h_{t+1})$ is convex in $n^h_{t+1}$. Then, since convexity is preserved under affine maps and positively weighted summations, we have that $\text{E} [V^h_{t+1}(n^h_t-u^h_t+X_{t+1}^h+\Internalarrival{h}{t+1})]$ is convex in both $n^h_{t}$ and $u^h_t$. Therefore, $G^h_t(n^h_t,u^h_t)$ is also convex in both $n^h_{t}$ and $u^h_t$. Consider then the relaxed version of the maximization problem in \eqref{Eq: proof_1} with the decision space $[0,n_t^h]$ instead of $\mathcal U_t^h(n_t^h)=\{0,\dots, n_t^h\}$. The relaxed maximization problem is of a convex function over a compact and convex set.  Therefore, the maximizer is at one of the boundary points of the relaxed decision space $[0,n_t^h]$, see  \cite{Rockafellar1970} for proof. Since the boundary points of $[0,n_t^h]$ are elements in  $\mathcal U_t^h(n_t^h)$, the maximizer of the maximization problem in \eqref{Eq: proof_1} is also one of these two boundary points. That is, the optimal release decision $ u_{t}^{h,*}$ takes values in in the set $\{0,n^h_{t}\}$. By this fact we have that
	\begin{align*}
V^h_t(n^h_t)= \underset{ }{\text{max}} \{ G^h_t(n^h_t,0), G_t(n^h_t,n^h_t)  \},
\end{align*}
which is convex in $n^h_t$ since  $G^h_t(n^h_t,0)$ and $G^h_t(n^h_t,n^h_t)$ are convex functions in $n^h_t$ and the maximizer of two convex functions is convex. Finally, $V^h_{T}(n^h_{T})=R^h(n^h_T,n^h_T)$ is convex by assumption. Thus,  $V^h_{t}(n^h_{t})$ is convex for all $t\leq T$ by induction. Conclusions in first part of Theorem~\ref{Theorem:first_hub1} follow.

In the second part of Theorem \ref{Theorem:first_hub1}, we assume the piecewise-linear form of $R^h(n_t^h,u_t^h)$ in \eqref{Eq:reward}. Then we have for $n^h_t\geq 1$ that 
\begin{equation*}
G^h_t(n^h_t,n^h_t)= n^h_t\platooningbenefit{h}+F_t,
\end{equation*}
where $F_t=-\platooningbenefit{h}+\text{E} [V^h_{t+1}(X_{t+1}^h+\Internalarrival{h}{t+1})]$  and 
\begin{equation*}
G^h_t(n^h_t,0)= -cn^h_t+\text{E} [V^h_{t+1}(n^h_t+X_{t+1}^h+\Internalarrival{h}{t+1})].
\end{equation*}
Since the optimal release decision $u_t^{h,*}$ takes values in the set $\{0,n^h_{t}\}$, we have that $u_t^{h,*}=n^h_{t}$ if and only if
\small
\begin{align}\label{Eq:theorem2_diff}
\Delta (n^h_t) &=G^h_t(n^h_t,n^h_t)-G^h_t(n^h_t,0) \\
& =n^h_t\platooningbenefit{h}+F+cn^h_t-  \text{E} [V^h_{t+1}(n^h_t+X_{t+1}^h+\Internalarrival{h}{t+1}) ]\geq 0. \nonumber
\end{align}
\normalsize
 To show that there exists a threshold $\threshold{h}{t}$ such that $u_t^{h,*}=n^h_{t}$ if and only if $n^h_{t}\geq \threshold{h}{t}$, we show that the optimal release decision is $u_t^{h,*}=n^h_{t}\geq 1$ (when the number of trucks at hub $h$ is $n^h_{t}$)  implies the optimal release decision $u_t^{h,*}=n^h_{t}+1$ (when the number of trucks at hub $h$ is $n^h_{t}+1$). To show this, it is sufficient to show that $\Delta (n^h_t)$ is non-decreasing in $n^h_t$. First, assume that there exists a threshold $\threshold{h}{t+1}$ such that $u_{t+1}^{h,*}=n^h_{t+1}$  if and only if $n^h_{t+1}\geq \threshold{h}{t+1}$. This implies that $V^h_{t+1}(n^h_{t+1}+1)-V^h_{t+1}(n^h_{t+1})=\platooningbenefit{h}$ if $n^h_{t+1}\geq \threshold{h}{t+1}$. Since convex functions have increasing differences it follows that
\begin{equation}\label{Eq:appendix_theorem2_1}
V^h_{t+1}(n^h_{t+1}+1)-V^h_{t+1}(n^h_{t+1})\leq   \platooningbenefit{h},
\end{equation} 
for all $n^h_{t+1}$.
Thus, equations \eqref{Eq:theorem2_diff} and \eqref{Eq:appendix_theorem2_1} imply
\begin{align*}
\Delta (n^h_t+1)-	\Delta (n^h_t)=   \\ & \hspace{-3.2cm}
\platooningbenefit{h}+c-\text{E} [V^h_{t+1}(n^h_t+1+X_{t+1}^h+\Internalarrival{h}{t+1})]-  \\ & \hspace{-2cm} +\text{E} [V^h_{t+1}(n^h_t+X_{t+1}^h+\Internalarrival{h}{t+1}) ] \geq  \platooningbenefit{h}+c-\platooningbenefit{h}>0.
\end{align*}
We regard the threshold $\threshold{h}{T}=1$ at the terminal time step and the conclusion in the second part of Theorem \ref{Theorem:first_hub1} follows by induction.

\section{Transition and observation probabilities}\label{app_estimator}

The optimal release policy at hub $h-1$ is in the form given in Theorem~\ref{Theorem:first_hub1}. Consequently, there exist a set $\releaseset{t-\traveltime{h}}{h-1}  \subseteq \mathbb{Z}_{\geq 0}$ such that the optimal release decision $u_{t-\traveltime{h}}^{h-1,*}=n_{t-\traveltime{h}}^{h-1}$ if  $n_{t-\traveltime{h}}^{h-1} \in \releaseset{t-\traveltime{h}}{h-1}$ and otherwise $u_{t-\traveltime{h}}^{h-1,*}=0$. Then, for $i\notin \releaseset{t-\traveltime{h}}{h-1}$ and $j\notin \releaseset{t-\traveltime{h}+1}{h-1}$, we have \small
\begin{align*}\label{explicitp}
  \transprob{h}{t}{i}{j}  \!=\! \!\begin{cases}
\Pr( X^{h-1}_{t-\traveltime{h}+1}+i=j) & \text{if } j\geq i \land j\neq 0\\
\Pr(X^{h-1}_{t-\traveltime{h}+1}+i \in  \releaseset{t-\traveltime{h}+1}{h-1}) & \text{if } i\neq 0 \land j= 0 \\
\Pr(X^{h-1}_{t-\traveltime{h}+1}\in  \releaseset{t-\traveltime{h}+1} {h-1}\lor  X^{h-1}_{t-\traveltime{h}+1} \!= \!0) \hspace{-0.25cm} & \text{if } i=0 \land j=0 \\
0 & \text{else,}
\end{cases}
\end{align*}  \normalsize
where the first case is when the number of trucks at hub $h-1$ after the release decision is non-decreasing due to not releasing, the second case is when the number of trucks at hub $h-1$ after the release decision becomes zero due to releasing, and the third case is when the number of trucks at hub $h-1$ after the release decision remains zero due to releasing or no arrivals.  Furthermore,   for $\internalarrival{}{}\neq0$ and $j= 0$, we have
\small
\begin{align*}
\obsprob{h}{t}{i}{0}{\internalarrival{}{}} & = \\ & \hspace{-1cm}\sum \limits_{n\in \releaseset{t-\traveltime{h}+1}{h-1} }\!\!\!\!\!\!\!\! \Pr( X^{h-1}_{t-\traveltime{h}+1}+i=n)  \Pr(\Internalarrival{h}{t+1}=\internalarrival{}{}| U^{h-1}_{t-\traveltime{h}+1 }=n  ) /\transprob{h}{t}{i}{0} ,
\end{align*}
\normalsize
where $\Pr(\Internalarrival{h}{t+1}=\internalarrival{}{}| U^{h-1}_{t-\traveltime{h}+1 }=n  )$ is determined by  $l^h$, and for  $\internalarrival{}{}\neq0$ and $j\neq  0$, we have $\obsprob{h}{t}{i}{j}{\internalarrival{}{}}=0$.

\section{Proof of Theorem \ref{Theorem:second_hub2}}\label{Appendix4}

 The optimal release decision at hub $h-1$, that is $u_t^{h-1,*}$, takes values in the set $\{0,n_t^{h-1}\}$, as shown in Theorem  \ref{Theorem:first_hub1}. Thus, $\obsprob{h}{t}{i}{j}{\internalarrival{h}{t+1}}=0$ for $j>0$ and $\internalarrival{h}{t+1}\neq 0$. This implies $\filterj{h}{t+1}{0}=1$ and $\filterj{h}{t+1}{j}=0$ for $j>0$ when $\internalarrival{h}{t+1}\neq 0$. That is,  $\filter{h}{t}$ is reset to $\filter{h}{t}=(1,0,\dots,0)$ when $\internalarrival{h}{t}\neq 0$ and the number of time steps since a non-zero arrival was observed (or more precisely $w^h_t$) is the only variable affecting $\filter{h}{t}$. We can therefore write the bellman equation as 
\begin{align*}
V_t(n_t^h,w_t^h)  = &  \underset{ u_t^h \in \mathcal U^h_t(n_t^h)}{\text{max}}  R^h(n_t^h,u_t^h) +  \\  & \hspace{-0.7cm} \text{E} [V^h_{t+1}(n^h_t-u_t^h+X_{t+1}^h+\Internalarrival{h}{t+1}, w^h_{t+1}) ].
\end{align*}



The proof of the rest of Theorem \ref{Theorem:second_hub2} is similar to the proof of Theorem~ \ref{Theorem:first_hub1} and we therefore omit most details  but provide the step which is least trivial. If  $V^h_{t+1}(n_{n+1}^h,w^h_{t+1})$ is convex in $n_{n+1}^h$, then  $\text{E} [V^h_{t+1}(n^h_t-u_t^h+X_{t+1}^h+\Internalarrival{h}{t+1}, w^h_{t+1}) ]$ is convex in $n^h_t$ and $u_t^h$, and  by following similar steps as in Appendix~\ref{Appendix1}, this implies that $V^h_{t}(n_t^h,w^h_{t})$ is convex in $n_t^h$. Then, $V^h_{t}(n_t^h,w^h_{t})$ is convex for all $t\leq T$ by induction since  $V^h_{T}(n_T^h,w^h_{T})=R^h(n_t^h,n_t^h)$ is convex in $n_t^h$. The remaining part of the proof follows similar steps as the  proof of Theorem~\ref{Theorem:first_hub1} but including the state $w^h_t$ in the notations. 




\ifCLASSOPTIONcaptionsoff
  \newpage
\fi



%



\bibliographystyle{ieeetr}
\bibliography{RefDatabase}

\begin{thebibliography}{10}

\bibitem{Browand2004}
F.~Browand, J.~McArthur, and C.~Radovich, ``Fuel saving achieved in the field test of two tandem trucks,'' {\em Technical report, University of Sourthern California}, 2004.

\bibitem{Davila2013}
A.~Davila, E.~del Pozo, E.~Aramburu, and A.~Freixas, ``Environmental benefits of vehicle platooning,'' in {\em Symposium on International Automotive Technology 2013}, jan 2013.

\bibitem{Alam2015}
A.~Alam, B.~Besselink, V.~Turri, J.~M{\aa}rtensson, and K.~H. Johansson, ``Heavy-duty vehicle platooning for sustainable freight transportation: A cooperative method to enhance safety and efficiency,'' {\em IEEE Control Systems Magazine}, vol.~35, pp.~34--56, Dec 2015.

\bibitem{Tsugawa2016}
S.~Tsugawa, S.~Jeschke, and S.~E. Shladover, ``A review of truck platooning projects for energy savings,'' {\em IEEE Transactions on Intelligent Vehicles}, vol.~1, pp.~68--77, March 2016.

\bibitem{Bishop2017}
R.~Bishop, D.~Bevly, L.~Humphreys, S.~Boyd, and D.~Murray, ``Evaluation and testing of driver-assistive truck platooning: Phase 2 final results,'' {\em Transportation Research Record}, vol.~2615, no.~2615, pp.~11--18, 2017.

\bibitem{Ioannou1993}
P.~A. {Ioannou} and C.~C. {Chien}, ``Autonomous intelligent cruise control,'' {\em IEEE Transactions on Vehicular Technology}, vol.~42, pp.~657--672, Nov 1993.

\bibitem{Fernandes2012}
P.~{Fernandes} and U.~{Nunes}, ``Platooning with {IVC}-enabled autonomous vehicles: Strategies to mitigate communication delays, improve safety and traffic flow,'' {\em IEEE Transactions on Intelligent Transportation Systems}, vol.~13, pp.~91--106, March 2012.

\bibitem{Jo2019}
Y.~Jo, J.~Kim, C.~Oh, I.~Kim, and G.~Lee, ``Benefits of travel time savings by truck platooning in {K}orean freeway networks,'' {\em Transport Policy}, vol.~83, pp.~37 -- 45, 2019.

\bibitem{Janssen2015}
R.~Janssen, H.~Zwijnenberg, I.~Blankers, and J.~de~Kruijff, ``Truck platooning,'' {\em Driving the}, 2015.

\bibitem{Chottani2018}
{A. {Chottani}, G. {Hastings}, J. {Murnane}, and F. {Neuhaus}}, ``Distraction or disruption? {A}utonomous trucks gain ground in {US} logistics.'' McKinsey Smith Co, http://https://www.mckinsey.com/industries/travel-transport-and-logistics/our-insights/distraction-or-disruption-autonomous-trucks-gain-ground-in-us-logistics, 2018.

\bibitem{Axelsson2020}
J.~Axelsson, T.~Bergh, A.~Johansson, B.~M{\aa}rdberg, P.~Svenson, and V.~{\AA}kesson, ``Truck platooning business case analysis,'' 2020.

\bibitem{Larsson2015}
E.~Larsson, G.~Sennton, and J.~Larson, ``The vehicle platooning problem: Computational complexity and heuristics,'' {\em Transportation Research Part C: Emerging Technologies}, vol.~60, pp.~258 -- 277, 2015.

\bibitem{Liang2016}
K.~Liang, J.~M{\aa}rtensson, and K.~H. Johansson, ``Heavy-duty vehicle platoon formation for fuel efficiency,'' {\em IEEE Transactions on Intelligent Transportation Systems}, vol.~17, pp.~1051--1061, April 2016.

\bibitem{Hoef2018}
S.~van~de Hoef, K.~H. Johansson, and D.~V. Dimarogonas, ``Fuel-efficient en route formation of truck platoons,'' {\em IEEE Transactions on Intelligent Transportation Systems}, vol.~19, pp.~102--112, Jan 2018.

\bibitem{Xiong2021}
X.~Xiong, J.~Sha, and L.~Jin, ``Optimizing coordinated vehicle platooning: An analytical approach based on stochastic dynamic programming,'' {\em Transportation Research Part B: Methodological}, vol.~150, pp.~482--502, 2021.

\bibitem{Zhang2017}
W.~Zhang, E.~Jenelius, and X.~Ma, ``Freight transport platoon coordination and departure time scheduling under travel time uncertainty,'' {\em Transportation Research Part E: Logistics and Transportation Review}, vol.~98, pp.~1 -- 23, 2017.

\bibitem{Boysen2018}
N.~Boysen, D.~Briskorn, and S.~Schwerdfeger, ``The identical-path truck platooning problem,'' {\em Transportation Research Part B: Methodological}, vol.~109, pp.~26 -- 39, 2018.

\bibitem{Larsen2019}
R.~Larsen, J.~Rich, and T.~K. Rasmussen, ``Hub-based truck platooning: Potentials and profitability,'' {\em Transportation Research Part E: Logistics and Transportation Review}, vol.~127, pp.~249 -- 264, 2019.

\bibitem{Johansson2021}
A.~Johansson, E.~Nekouei, K.~H. Johansson, and J.~M{\aa}rtensson, ``Strategic hub-based platoon coordination under uncertain travel times,'' {\em IEEE Transactions on Intelligent Transportation Systems}, 2021.

\bibitem{Bai2021}
T.~Bai, A.~Johansson, K.~H. Johansson, and J.~Mårtensson, ``Event-triggered distributed model predictive control for platoon coordination at hubs in a transport system,'' in {\em 2021 60th IEEE Conference on Decision and Control (CDC)}, pp.~1198--1204, 2021.

\bibitem{Johansson2021b}
A.~Johansson, X.~Sun, J.~M{\aa}rtensson, and Y.~Yin, ``Real-time cross-fleet pareto-improving truck platoon coordination,'' in {\em 2021 IEEE International Intelligent Transportation Systems Conference (ITSC)}, pp.~996--1003, IEEE, 2021.

\bibitem{Johansson2020}
A.~Johansson, V.~Turri, E.~Nekouei, K.~H. Johansson, and J.~M{\aa}rtensson, ``Truck platoon formation at hubs: An optimal release time rule,'' {\em IFAC-PapersOnLine}, vol.~53, no.~2, pp.~15312--15318, 2020.

\bibitem{Adler2020}
A.~Adler, D.~Miculescu, and S.~Karaman, {\em Optimal Policies for Platooning and Ride Sharing in Autonomy-Enabled Transportation}, pp.~848--863.
\newblock Cham: Springer International Publishing, 2020.

\bibitem{Bhoopalam2018}
A.~K. Bhoopalam, N.~Agatz, and R.~Zuidwijk, ``Planning of truck platoons: A literature review and directions for future research,'' {\em Transportation Research Part B}, vol.~107, pp.~212--228, 2018.

\bibitem{Lesch2021}
V.~Lesch, M.~Breitbach, M.~Segata, C.~Becker, S.~Kounev, and C.~Krupitzer, ``An overview on approaches for coordination of platoons,'' {\em IEEE Transactions on Intelligent Transportation Systems}, 2021.

\bibitem{Ho1972}
Y.-C. Ho {\em et~al.}, ``Team decision theory and information structures in optimal control problems--part i,'' {\em IEEE Transactions on Automatic control}, vol.~17, no.~1, pp.~15--22, 1972.

\bibitem{Dave2019}
A.~Dave and A.~A. Malikopoulos, ``Decentralized stochastic control in partially nested information structures,'' {\em IFAC-PapersOnLine}, vol.~52, no.~20, pp.~97--102, 2019.

\bibitem{Bertsekas2012}
D.~Bertsekas, {\em Dynamic programming and optimal control: Volume I}, vol.~4.
\newblock Athena scientific, 2012.

\bibitem{TV2022}
{The Swedish Transport Administration}, ``Road flow map.'' https://vtf.trafikverket.se/SeTrafikinformation.
\newblock Accessed: 2021-01-13.

\bibitem{Trafikanalys2016}
Trafikanalys, {\em Godstransporter i Sverige - en nul{\"a}gesanalys Rapport 2016:7}.
\newblock 2016.

\bibitem{Rockafellar1970}
R.~T. Rockafellar, {\em Convex Analysis}.
\newblock Princeton University Press, 1970.

\end{thebibliography}

\end{document}